\newcommand{\expdensesr}{4/3}
\newcommand{\expdensefield}{1.157}
\newcommand{\explbsr}{1.333}
\newcommand{\explbfield}{1.156}
\newcommand{\expprevsr}{1.927}
\newcommand{\expprevfield}{1.907}
\newcommand{\expnewsr}{1.867}
\newcommand{\expnewfield}{1.832}
\newcommand{\eps}{\varepsilon}
\newcommand{\alg}{\mathcal{A}}
\newcommand{\TT}{\mathcal{T}}
\newcommand{\PP}{\mathcal{P}}
\newcommand{\TTall}{\hat{\TT}}
\newcommand{\US}{\mathsf{US}}
\newcommand{\RS}{\mathsf{RS}}
\newcommand{\CS}{\mathsf{CS}}
\newcommand{\BD}{\mathsf{BD}}
\newcommand{\AS}{\mathsf{AS}}
\newcommand{\GM}{\mathsf{GM}}
\newcommand{\XX}{\mathsf{X}}
\newcommand{\YY}{\mathsf{Y}}
\newcommand{\ZZ}{\mathsf{Z}}
\newcommand{\OR}{\mathsf{OR}}
\newcommand{\bb}{\mathsf{b}}
\newcommand{\GG}{\mathcal{G}}
\DeclareMathOperator{\out}{out}
\title{Low-Bandwidth Matrix Multiplication: Faster Algorithms and More General Forms of Sparsity}
\titlerunning{Low-Bandwidth Matrix Multiplication}
\authorrunning{C. Gupta, J.\,H. Korhonen, J. Studen\'y, J. Suomela and H. Vahidi}
\author{Chetan Gupta}{IIT Roorkee, India}{chetan.gupta@cs.iitr.ac.in}{https://orcid.org/0000-0002-0727-160X}{}
\author{Janne H. Korhonen}{Finland} {janne.h.korhonen@gmail.com}{https://orcid.org/0009-0000-5494-1218}{}
\author{Jan Studen\'y}{Aalto University, Finland} {jan.studeny@aalto.fi}{https://orcid.org/0000-0002-9887-5192}{This work was supported in part by the Research Council of Finland, Grants 321901, and 333837.}
\author{Jukka Suomela}{Aalto University, Finland} {jukka.suomela@aalto.fi}{https://orcid.org/0000-0001-6117-8089}{}
\author{Hossein Vahidi}{Aalto University, Finland} {hossein.vahidi@aalto.fi}{https://orcid.org/0000-0002-0040-1213}{This work was supported in part by the Research Council of Finland, Grant 333837.}
\keywords{distributed algorithms, low-bandwidth model, matrix multiplication}
\begin{document}
	\maketitle
	
	\begin{abstract}
		In prior work, Gupta et al.\ (SPAA 2022) presented a distributed algorithm for multiplying sparse $n \times n$ matrices, using $n$ computers. They assumed that the input matrices are \emph{uniformly sparse}---there are at most $d$ non-zeros in each row and column---and the task is to compute a uniformly sparse part of the product matrix. The sparsity structure is globally known in advance (this is the \emph{supported} setting). As input, each computer receives one row of each input matrix, and each computer needs to output one row of the product matrix. In each communication round each computer can send and receive one $O(\log n)$-bit message. Their algorithm solves this task in $O(d^{1.907})$ rounds, while the trivial bound is $O(d^2)$.
		
		We improve on the prior work in two dimensions: First, we show that we can solve the same task faster, in only $O(d^{1.832})$ rounds. Second, we explore what happens when matrices are not uniformly sparse. We consider the following alternative notions of sparsity: row-sparse matrices (at most $d$ non-zeros per row), column-sparse matrices, matrices with bounded degeneracy (we can recursively delete a row or column with at most $d$ non-zeros), average-sparse matrices (at most $dn$ non-zeros in total), and general matrices.

		We present a near-complete classification of the complexity of matrix multiplication for all combinations of these notions of sparsity. We show that almost all cases fall in one of these classes:
		\begin{enumerate}
			\item We have an upper bound of $O(d^{1.832})$ rounds for computing $X = AB$. An example is the case where $X$ and $A$ are uniformly sparse but $B$ is average-sparse; this is a generalization of the prior work beyond the uniformly-sparse case.
			\item We have a lower bound of $\Omega(\log n)$ and an upper bound of $O(d^2 + \log n)$ rounds for computing $X = AB$. An example is the case where $X$, $A$, and $B$ have bounded degeneracy.
			\item We have a lower bound of $n^{\Omega(1)}$ rounds for computing $X = AB$. An example is the case where $X$ and $A$ have bounded degeneracy but $B$ is a general matrix.
			\item We have a conditional lower bound: a fast algorithm for computing $X = AB$ would imply major improvements in \emph{dense} matrix multiplication. An example is the case where $X$, $A$, and $B$ are average-sparse matrices.
		\end{enumerate}
		Our work highlights the role that bounded degeneracy has in the context of distributed matrix multiplication: it is a natural class of sparse matrices that is restrictive enough to admit very fast algorithms---much faster than what we can expect for the average-sparse case.
	\end{abstract}
	
	\maketitle
	
	\section{Introduction}
	
	In this work we study sparse matrix multiplication in a distributed setting, for the low-bandwidth model (which is closely related to the node-capacitated clique model). Our work improves over the prior work by \cite{gupta-2022-sparse} in two ways: we present a faster algorithm, and we study more general forms of sparsity.
	
	\subsection{Setting and prior work}
	
	The task is to compute the matrix product $X = AB$ for $n \times n$ matrices using a network of $n$ computers. Initially each computer holds its own part of $A$ and $B$. Computation proceeds in rounds, and in one round each computer can send one $O(\log n)$-bit message to another computer and receive one message from another computer; we will assume that the elements of $A$, $B$, and $X$ fit in one message. Eventually each computer has to report its own part of the product matrix $X$. How many communication rounds are needed to solve the task?
	
	The trivial solution for \emph{dense} matrices takes $O(n^2)$ rounds: everyone sends all information to computer number $1$, which solves the task locally and then distributes the solution to other computers. However, we can do better: for matrix multiplication over semirings there is an algorithm that runs in $O(n^{\expdensesr})$ rounds, and for matrix multiplication over fields there is an algorithm that runs in $O(n^{2-2/\omega})$ rounds, where $\omega$ is the exponent of centralized matrix multiplication \cite{alg-method-congest-cliq2019}. By plugging in the latest value of $\omega < 2.371552$ \cite{vassilevska-williams-2024-omega}, we obtain $O(n^{\expdensefield})$ rounds.
	
	The key question is how much better we can do when we multiply \emph{sparse} matrices. Let us first look at the case of \emph{uniformly sparse} matrices, with at most $d$ nonzero elements in each row and column. We also assume that we are only interested in a uniformly sparse part of the product matrix $X$. Now it is natural to assume that each computer initially holds one row of $A$ and one row of $B$, and it needs to know one row of $X$. We assume the \emph{supported} setting: the sparsity structures of $A$, $B$, and $X$ are known in advance, while the values of the nonzero elements are revealed at run time. There are two algorithms from prior work that are applicable in this setting: for moderately large values of $d$ we can use the algorithm by \cite{censor2021fast}, which runs in $O(dn^{1/3})$ rounds, while for small values of $d$ the fastest algorithm is due to \cite{gupta-2022-sparse}, and the round complexity is $O(d^{\expprevsr})$ for semirings and $O(d^{\expprevfield})$ for fields; see \cref{tab:summary} for an overview.
	
	\begin{table}[b]
		\centering
		\caption{Complexity of distributed sparse matrix multiplication}\label{tab:summary}
		\begin{tabular}{@{}lll@{}}
			\toprule
			Semirings & Fields & Reference \\
			\midrule
			$O(n^2)$ & $O(n^2)$ & trivial \\
			$O(n^{\expdensesr})$ & $O(n^{\expdensefield})$ & \cite{vassilevska-williams-2024-omega,alg-method-congest-cliq2019} \\
			$O(dn^{1/3})$ & $O(dn^{1/3})$ & \cite{censor2021fast} \\
			$O(d^2)$ & $O(d^2)$ & trivial, \cite{gupta-2022-sparse} \\
			$O(d^{\expprevsr})$ & $O(d^{\expprevfield})$ & \cite{gupta-2022-sparse} \\
			$O(d^{\expnewsr})$ & $O(d^{\expnewfield})$ & this work, \cref{thm:us-us-as} \\
			\bottomrule
		\end{tabular}
	\end{table}
	
	\subsection{Contribution 1: faster algorithm}\label{ssec:intro-contrib-1}
	
	Our first contribution is improvements in the running time: we design a faster algorithm that solves the case of semirings in $O(d^{\expnewsr})$ rounds and the case of fields in $O(d^{\expnewfield})$ rounds.
	
	While there are no non-trivial unconditional lower bounds, we point out that $O(d^{\explbsr})$ for semirings or $O(d^{\explbfield})$ for fields would imply major breakthroughs for dense matrix multiplication (by simply plugging in $d = n$). In the following figure, we illustrate the progress we make towards these milestones:
	\begin{center}
		\includegraphics[page=1,scale=0.9]{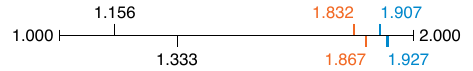}
	\end{center}
	
	The algorithm from prior work \cite{gupta-2022-sparse} is based on the idea of \emph{processing triangles}. Here a \emph{triangle} is a triple $\{i,j,k\}$ such that $A_{ij}$ and $B_{jk}$ are nonzeros and $X_{ik}$ is one of the elements of interest, and we say that we \emph{process} the triangle if we add the product $A_{ij}B_{jk}$ to the sum $X_{ik}$. Once all such triangles are processed, we have computed all elements of interest in the product matrix.
	There are two strategies that \cite{gupta-2022-sparse} uses for processing triangles:
	\begin{enumerate}
		\item If there are many triangles in total, we can find a dense cluster of them, we can interpret such a cluster as a tiny instance of dense matrix multiplication, and we can apply dense matrix multiplication algorithms to batch-process such clusters.
		\item If there are few triangles in total, we can afford to process the triangles one by one.
	\end{enumerate}
	
	Our improvements are in the second phase. Trivially, if we have a uniformly sparse instance, there are at most $d^2n$ triangles in total, and we can process them in a naive manner in $O(d^2)$ rounds. The second phase of \cite{gupta-2022-sparse} shows that if the number of triangles is $d^{2-\eps} n$, one can process them in $O(d^{2-\eps/2})$ rounds. Here the key obstacle is the expression $\eps/2$ in the exponent: even if we eliminate many triangles in the first phase, the second phase is still relatively expensive. We design a new algorithm that is able to process $d^{2-\eps} n$ triangles in $O(d^{2-\eps})$ rounds. By combining this with the general strategy of \cite{gupta-2022-sparse} and optimizing the trade-off between using time in the first phase vs.\ using time in the second phase, we obtain the new round complexities $O(d^{\expnewsr})$ and $O(d^{\expnewfield})$---the difference between semirings and fields is in the dense matrix multiplication routine that we use in the first phase.
	
	\subsection{Contribution 2: beyond uniform sparsity}\label{ssec:intro-contrib-2}
	
	So far we have followed the lead of \cite{gupta-2022-sparse} and discussed \emph{uniformly sparse} matrices (i.e., each row and column has at most $d$ nonzeros). This is a rather restrictive notion of sparsity---can we extend prior work to more general notions of sparsity?
	
	Perhaps the most natural candidate to consider would be \emph{average sparsity}: we would simply assume that the total number of nonzeros in $A$ and $B$ and the total number of elements of interest in $X$ is at most $dn$. Unfortunately, it turns out that this notion of sparsity is way too general, and we cannot expect fast algorithms for this case: in \cref{thm:AS-AS-AS-hard} we show that if we could solve average-sparse matrix multiplication for $n\times n$ matrices using $n$ computers in time that is independent of $n$ or only mildly depends on $n$ (say, polylogarithmic in $n$), it would imply major breakthroughs in our understanding of dense matrix multiplication.
	
	In summary, in order to have any real hope of pushing the complexity of sparse matrix multiplication down to $O(d^{\expnewfield})$ or even something more modest like $O(d^2 + \log n)$, we need to explore some \emph{intermediate notions of sparsity}. In this work we make use of the following notions of sparsity:
	\begin{itemize}
		\item $\US(d)$ = uniformly sparse: at most $d$ nonzeros per row and column.
		\item $\RS(d)$ = row-sparse: at most $d$ nonzeros per row.
		\item $\CS(d)$ = column-sparse: at most $d$ nonzeros per column.
		\item $\BD(d)$ = bounded degeneracy: we can recursively eliminate the matrix so that at each step we delete a row or column with at most $d$ nonzeros.
		\item $\AS(d)$ = average-sparse: at most $dn$ nonzeros in total.
		\item $\GM$ = general matrices.
	\end{itemize}
	We will omit $(d)$ when it is clear from the context, and we will use e.g.\ expressions such as $\US \times \BD = \AS$ to refer to the task of computing $AB = X$ such that $A \in \US(d)$, $B \in \BD(d)$, and $\hat{X} \in \AS(d)$; here we write $\hat{X}$ for the matrix that indicates which elements of the product $X$ we are interested in.
	
	Family $\BD$ may at first look rather unusual in the context of linear algebra. However, if we interpret a matrix $A$ as a bipartite graph $G$ (with an edge $\{i,j\}$ whenever $A_{ij}$ is nonzero), then $A \in \BD(d)$ corresponds to the familiar graph-theoretic notion that $G$ is $d$-degenerate. This is a widely used notion of sparsity in the graph-theoretic setting, and also closely connected with other notions of sparsity such as bounded arboricity.
	
	We also point out that any matrix $A \in \BD(d)$ can be written as a sum $A = X + Y$ such that $X \in RS(d)$ and $Y \in CS(d)$; to see this, eliminate $A$ by deleting sparse rows or sparse columns, and put each sparse row in $X$ and put each sparse column in $Y$. In particular, matrix multiplication $\BD \times \BD$ decomposes into operations of the form $\CS \times \CS$, $\RS \times \CS$, $\CS \times \RS$, and $\RS \times \RS$. In summary, we have
	\[
	\US \subseteq
	\left\{ \begin{aligned} \RS \\[-3pt] \CS \end{aligned} \right\}
	\subseteq \BD \subseteq \AS \subseteq \GM.
	\]
	
	As we will see, all of our algorithmic results are symmetric w.r.t.\ the three matrices. To help present such results, we will introduce the following shorthand notation. Let $\XX$, $\YY$, and $\ZZ$ be families of matrices. We write $[\XX:\YY:\ZZ]$ to refer to the following six operation: computing $\XX \times \YY = \ZZ$, $\YY \times \XX = \ZZ$, $\XX \times \ZZ = \YY$, $\ZZ \times \XX = \YY$, $\YY \times \ZZ = \XX$, and $\ZZ \times \YY = \XX$.
	
	Now equipped with this notation, we can rephrase the main result from the prior work \cite{gupta-2022-sparse}: they show that $[\US : \US : \US]$ can be solved in $O(d^{\expprevsr})$ rounds for semirings and in $O(d^{\expprevfield})$ rounds for fields. On the other hand, in \cref{thm:AS-AS-AS-hard} we argue why it is unlikely to extend this all the way to $[\AS : \AS : \AS]$. The key question is what happens between the two extremes, $\US$ and $\AS$? In particular, could we solve $[\BD:\BD:\BD]$ efficiently?

	In this work we present a near-complete classification of the complexity of matrix multiplication for all possible combinations of classes $\US$, $\BD$, $\AS$, and $\GM$. We show that almost all cases fall in one of these classes:
	\begin{enumerate}
		\item We have an upper bound of $O(d^{1.832})$ rounds by \cref{thm:us-us-as}. For example, $[\US : \US : \AS]$ falls in this class.
		\item We have a lower bound of $\Omega(\log n)$ rounds by \cref{thm:US-BD-BD-hard} and an upper bound of $O(d^2 + \log n)$ rounds by \cref{thm:US-AS-GM,thm:BD-AS-AS}. For example, $[\BD:\BD:\BD]$ falls in this class.
		\item We have a lower bound of $\Omega(\sqrt{n})$ rounds by \cref{thm:routing-hard}. For example, $[\BD:\BD:\GM]$ falls in this class.
		\item We have a conditional lower bound by \cref{thm:AS-AS-AS-hard}: a fast algorithm would imply major improvements in \emph{dense} matrix multiplication. For example, $[\AS:\AS:\AS]$ falls in this class.
	\end{enumerate}
	There is one outlier in our current classification: $[\US:\US:\GM]$ admits a trivial algorithm that runs in $O(d^4)$ rounds, but we do not know if it can be solved in $O(d^{1.832})$ rounds.

	Our results are summarized in \cref{tab:sum-sparse}. The table summarizes the results for the case of semirings, but analogous results hold for the case of fields---the only difference is that the exponent $\expnewsr$ is replaced with $\expnewfield$ and we have got $1 \le \lambda \le 2-2/\omega < \expdensefield$.
	
	\begin{table}[t]
		\centering
		\caption{Summary of results (for semirings). Here $1 \le \lambda \le 4/3$ is the infimum of exponents $\lambda$ such that dense matrix multiplication for semirings can be done in $O(n^\lambda)$ rounds. The lower bounds marked with $\dagger$ only hold for certain permutations of the matrix families.}\label{tab:sum-sparse}
		\begin{tabular}{@{}lllll@{}}
			\toprule
			Sparsity
			& \multicolumn{2}{l}{Upper bound}
			& \multicolumn{2}{l}{Lower bound}
			\\
			\midrule
			$[\US:\US:\US]$ & $O(d^{\expnewsr})$ & \cref{thm:us-us-as} & $\Omega(d^\lambda)$ & trivial \\
			\multicolumn{1}{@{}c}{$\dotsc$} && \\
			$[\US:\US:\AS]$ && \\
			\midrule
			$[\US:\US:\GM]$ & $O(d^4)$ & trivial & $\Omega(d^\lambda)$ & trivial \\
			\midrule
			$[\US:\BD:\BD]$ & $O(d^2 + \log n)$ & \cref{thm:US-AS-GM}& $\Omega(d^\lambda)$, $\Omega(\log n)$ & \cref{thm:US-BD-BD-hard} \\
			\multicolumn{1}{@{}c}{$\dotsc$} && \\
			$[\US:\AS:\GM]$ && \\
			\midrule
			$[\BD:\BD:\BD]$ & $O(d^2 + \log n)$ & \cref{thm:BD-AS-AS} & $\Omega(d^\lambda)$, $\Omega(\log n)$ & \cref{thm:US-BD-BD-hard} \\
			\multicolumn{1}{@{}c}{$\dotsc$} && \\
			$[\BD:\AS:\AS]$ && \\
			\midrule
			$[\US:\GM:\GM]$ &&& $\Omega(\sqrt{n})$ $^\dagger$ & \cref{thm:routing-hard} \\
			\multicolumn{1}{@{}c}{$\dotsc$} && \\
			$[\GM:\GM:\GM]$ && \\
			\midrule
			$[\BD:\BD:\GM]$ &&& $\Omega(\sqrt{n})$ $^\dagger$ & \cref{thm:routing-hard} \\
			\multicolumn{1}{@{}c}{$\dotsc$} && \\
			$[\GM:\GM:\GM]$ && \\
			\midrule
			$[\AS:\AS:\AS]$ &&& $\Omega(n^{(\lambda-1)/2})$ & \cref{thm:AS-AS-AS-hard} \\
			\multicolumn{1}{@{}c}{$\dotsc$} && \\
			$[\GM:\GM:\GM]$ && \\
			\bottomrule
		\end{tabular}
	\end{table}

	\subsection{Key conceptual message: role of bounded degeneracy}

	Our work highlights the role that the class of bounded-degeneracy matrices plays in the theory and practice of distributed matrix multiplication. While the class of average-sparse matrices is arguably more natural, it is also too broad to admit fast algorithms, even if we are in the supported setting (\cref{thm:AS-AS-AS-hard}). On the other hand, the class of uniformly-sparse matrices familiar from prior work can be far too restrictive for many applications. Our work suggests a promising new notion of sparsity in this context: matrices with bounded degeneracy. First, such matrices naturally arise especially in graph-theoretic application, and they also contain as special cases the important classes of row-sparse and column-sparse matrices. Second, our work shows that bounded-degeneracy matrices admit very efficient algorithms in the low-bandwidth setting (and as a corollary, also in any setting that is at least as strong as this model), at least if we know the structure of the matrices in advance (\cref{thm:BD-AS-AS}). We believe this can open a new research direction that studies the complexity of matrix multiplication across different distributed and parallel settings for bounded-degeneracy matrices.

	\subsection{Related work and applications}
	
	Our model of computing is the low-bandwidth model; this is closely related to the model known as \emph{node-capacitated clique} or \emph{node-congested clique} in the literature \cite{node-capacitated2019}---in the low-bandwidth model each computer can send and receive one message, while the node-capacitated clique model is usually defined so that each computer can send and receive $O(\log n)$ messages per round. Both the low-bandwidth model and the node-capacitated clique model can be interpreted as variants of the \emph{congested clique} model \cite{congest-clique2003}. Indeed, any algorithm that runs in $T(n)$ rounds in the congested clique model can be simulated in $nT(n)$ rounds in the low-bandwidth model, and for many problems (such as the dense matrix multiplication) such a simulation also results in the fastest known algorithms for the low-bandwidth model. Both the low-bandwidth model and the congested clique model can be interpreted as special cases of the classic \emph{bulk synchronous parallel model} \cite{BSP1990}.
	
	Matrix multiplication in this context has been primarily studied in the congested clique model \cite{alg-method-congest-cliq2019,le2016further,DBLP:conf/opodis/Censor-HillelLT18,censor2021fast}. However, as observed by \cite{gupta-2022-sparse}, the congested clique model is poorly-suited for the study of sparse matrix multiplication: one can only conclude that for sufficiently sparse matrices, the problem is solvable in $O(1)$ rounds, and one cannot explore more fine-grained differences between different algorithms.
	
	Similar to \cite{gupta-2022-sparse}, we work in the \emph{supported} version of the low-bandwidth model. In general, the supported model \cite{foerster2019preprocessing,foerster19preprocessing,schmid13local-sdn} refers to a setting in which the structure of the input is known in advance (we can do arbitrary \emph{preprocessing} based on the structure of the input), while the specific instance is revealed at run time. In the case of graph problems, we can see an unweighted graph $G$ in advance, while the adversary reveals a weighted version $G'$ of $G$ (a special case being a $0/1$-weighted or $1/\infty$-weighted graph, which can be interpreted as a subgraph of $G$). For matrix multiplication, we know the sparsity structure in advance (i.e., which elements are potentially nonzero), while the adversary reveals the concrete values at run time.
	
	Our logarithmic lower bounds build on the classic results in the CREW PRAM model by Cook, Dwork and Reischuk \cite{cook86} and Dietzfelbinger, Kutylowski and Reischuk \cite{dietzfelbinger1994}. As observed by e.g.\ Roughgarden, Vassilvitskii and Wang \cite{roughgarden18}, such lower bounds can be applied also in more modern settings.

	One of the main applications that we have in mind here is triangle detection, which has been extensively studied in the distributed setting \cite{dolev2012tri,izumi2017triangle,10.1145/3460900,10.1145/3446330,chang2019improved,korhonen2017deterministic,10.1145/3382734.3405742,gupta-2022-sparse}. If we can multiply matrices, we can also easily detect triangles in a graph. Moreover, $[\US:\US:\US]$ corresponds to triangle detection in a bounded-degree graph, while e.g.\ $[\AS:\AS:\AS]$ corresponds to triangle detection in a sparse graph.
	
	\subsection{Open questions for future work}
	
	In this work we initiated the study of different notions of sparsity in the context of low-bandwidth matrix multiplication. This work also gives rise to a number of questions for future work.
	
	First, our running times---$O(d^{\expnewsr})$ rounds for semirings and $O(d^{\expnewfield})$ for fields---are still far from the conditional lower bounds $\Omega(d^{\explbsr})$ and $\Omega(d^{\explbfield})$, respectively. Pushing the running time further down is a major challenge for future work. If we follow the two-phase approach of \cite{gupta-2022-sparse}, our present work essentially makes the second phase optimal; the main hope for future improvements now lies in the first phase of the algorithm.
	
	Our algorithms are designed for the supported model. Eliminating the knowledge of the support (i.e., knowledge of the sparsity structure) is a major challenge for future work.
	
	Finally, there are some gaps in \cref{tab:sum-sparse}. For example, we do not know if $[\US:\US:\GM]$ is strictly harder than $[\US:\US:\US]$, and also \cref{thm:routing-hard} do not cover all permutations of the matrix families. Finally, we do not know if $O(d^2 + \log n)$ can be further improved to, say, $O(d^{\expnewsr} + \log n)$.
	
	\section{Preliminaries}
	
	We work in the low-bandwidth model. There are $n$ computers. Initially each computer holds its own part of $A$ and $B$, and eventually each computer has to report its own part of $X$. When we study sparse matrices with at most $dn$ nonzeros, we assume that each computer holds at most $d$ elements, while when we study dense matrices, we assume that each computer holds at most $n$ elements.
	
	For our algorithms, it does not matter how the input and output is distributed among the computers---with an additional $O(d)$ time we can permute the input and output as appropriate. Our lower bounds hold for any fixed distribution of input and output that may also depend on the support.
	
	\subsection{Supported model and indicator matrices}
	
	When we study matrix multiplication in the supported model, we assume that we know in advance indicator matrices $\hat{A}$, $\hat{B}$, and $\hat{X}$ that encode the structure of our instance: $\hat{A}_{ij} = 0$ implies $A_{ij} = 0$, $\hat{B}_{jk} = 0$ implies $B_{jk} = 0$, and $\hat{X}_{ik} = 0$ indicates that we do not need to compute the value of $X_{ik}$. When we make assumptions on sparsity, our assumptions refer to the sparsity of $\hat{A}$, $\hat{B}$, and $\hat{X}$. For example, when we study $\US \times \BD = \AS$, we assume that $\hat{A} \in \US(d)$, $\hat{B} \in \BD(d)$, and $\hat{X} \in \AS(d)$. It then also follows that $A \in \US(d)$ and $B \in \BD(d)$.
	
	\subsection{Tripartite graph and triangles}
	
	It will be convenient to assume that our matrices are indexed with indices from three disjoint sets $I$, $J$, and $K$, each of size $n$. For example, elements of matrix $A$ are indexed with $A_{ij}$ where $i \in I$ and $j \in J$. Following \cite{gupta-2022-sparse}, we write $\TTall$ for the set of all \emph{triangles}, which are triples $\{i,j,k\}$ with $i \in I$, $j \in J$, and $k \in K$ such that $\hat{A}_{ij} \ne 0$, $\hat{B}_{jk} \ne 0$, and $\hat{X}_{ik} \ne 0$.
	
	For a collection of triangles $\TT$, we write $G(\TT)$ for the tripartite graph $G(\TT) = (V,E)$, where the set of nodes is $V = I \cup J \cup K$ and there is an edge $\{u,v\} \in E$ if there is a triangle $T \in \TTall$ with $\{u,v\} \subseteq T$. For a set of nodes $U \subseteq V$, we write $\TT[U] = \{ T \in \TT : T \subseteq U \}$ for the set of triangles induced by $U$.
	
	We say that we \emph{process} a triangle $\{i,j,k\}$ if we have added the product $A_{ij}B_{jk}$ to the sum $X_{ik}$. The key observation is that processing all triangles is exactly equivalent to computing all values of interest in the product matrix $X = AB$.
	
	\subsection{Clusters and clusterings}
	
	We say that $U \subseteq V$ is a \emph{cluster} if $U = I' \cup J' \cup K'$ for $I' \subseteq I$, $J' \subseteq J$, $K' \subseteq K$, $|I'| = d$, $|J'| = d$, and $|K'| = d$. We say that a collection of triangles $\PP$ is \emph{clustered} if there are disjoint clusters $U_1, \dotsc, U_k$ such that $\PP = \PP[U_1] \cup \dotsb \cup \PP[U_k]$.
	
	A clustered collection of triangles can be processed efficiently by applying an algorithm for dense matrix multiplication in parallel to each cluster \cite{gupta-2022-sparse}. By applying the strategy of \cite{alg-method-congest-cliq2019} and the latest value of $\omega$ from \cite{vassilevska-williams-2024-omega}, we obtain:
	\begin{lemma}\label{lem:clustered-solve}
		A clustered instance of matrix multiplication can be solved in $O(d^{4/3})$ rounds over semirings, and in $O(d^{1.156671})$ rounds over fields.
	\end{lemma}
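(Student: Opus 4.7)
The plan is to reduce the clustered instance to parallel dense matrix multiplication on each cluster, and then invoke the congested-clique dense-matrix-multiplication algorithms of \cite{alg-method-congest-cliq2019} via simulation, substituting the most recent bound on $\omega$.

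First, since each cluster $U_\ell = I'_\ell \cup J'_\ell \cup K'_\ell$ has $|I'_\ell|=|J'_\ell|=|K'_\ell|=d$ and the clusters are pairwise disjoint, the number of clusters satisfies $k \le n/d$. Each cluster encapsulates a self-contained $d \times d \times d$ dense matrix multiplication task: compute $X_{I'_\ell \times K'_\ell}$ from $A_{I'_\ell \times J'_\ell}$ and $B_{J'_\ell \times K'_\ell}$. I would dedicate a team of $3d$ distinct computers to each cluster, which is feasible up to a constant-factor overhead since $3dk \le 3n \le 3\cdot n$, and redistribute the $O(d^2)$ input entries per cluster onto its team. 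Since every computer initially holds $O(d)$ nonzeros and dispatches $O(d)$ entries to the teams, a standard Lenzen-style routing handles the redistribution in $O(d)$ rounds, and the symmetric redistribution of the output costs another $O(d)$ rounds.

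Second, within each team of $3d$ computers I would simulate a congested-clique dense matrix multiplication algorithm on $d$ nodes. In the congested clique each node can exchange $O(d)$ messages per round; in the low-bandwidth model each computer exchanges only $O(1)$, so one congested-clique round is simulated in $O(d)$ low-bandwidth rounds. The semiring routine runs in $O(d^{1/3})$ congested-clique rounds, giving $O(d^{4/3})$ low-bandwidth rounds, while the field routine of \cite{alg-method-congest-cliq2019} runs in $O(d^{1-2/\omega})$ congested-clique rounds, giving $O(d^{2-2/\omega})$ low-bandwidth rounds; substituting $\omega < 2.371552$ from \cite{vassilevska-williams-2024-omega} yields the claimed $O(d^{1.156671})$.

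The main obstacle is ensuring that all $k$ clusters can execute in parallel without exceeding the per-round bandwidth of any single computer. This is not entirely automatic, because a computer is naturally responsible for a constant number of indices (typically one from each of $I$, $J$, $K$), and those indices may happen to lie in different clusters. However, since the clusters partition their indices, each computer participates in at most a constant number of clusters and contributes $O(1)$ bandwidth to each, so the delegate-team scheduling preserves load balance up to constants. Once this scheduling step is handled, the three phases (input routing, per-cluster simulation, output routing) combine to give the stated bounds.
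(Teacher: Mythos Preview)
Your proposal is correct and follows essentially the same approach as the paper: the paper states the lemma with only a one-sentence justification, citing \cite{gupta-2022-sparse} for the strategy of processing the disjoint clusters in parallel via dense matrix multiplication and \cite{alg-method-congest-cliq2019,vassilevska-williams-2024-omega} for the congested-clique bounds and the value of $\omega$. Your write-up simply unpacks those citations into the explicit routing and simulation steps, which is exactly what those references contain.
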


	\section{Handling few triangles fast}
	
	We will start by a general result that will form the foundation for all of our upper-bound results. \cite[Lemma 5.2]{gupta-2022-sparse} observed that if we have a \emph{balanced} instance in which each node touches at most $\kappa$ triangles, we can solve matrix multiplication in $O(\kappa)$ rounds. However, the key challenge is what to do in \emph{unbalanced} instances: we have at most $\kappa n$ triangles, but some nodes can touch many more than $\kappa$ triangles. The following lemma shows how to handle also such instances efficiently:
	
	\begin{lemma}\label{lem:few-triangles}
		Let $\TT$ be a set of triangles with $|\TT| \le \kappa n$, and assume that for each pair of nodes $u,v \in V$ there are at most $m$ triangles $T \in \TT$ with $\{u,v\} \in T$. Assume that each computer initially holds at most $d$ elements of $A$ and at most $d$ elements of $B$, and it will need to hold at most $d$ elements of $X$. Then all triangles in $\TT$ can be processed (over fields or semirings) in $O(\kappa + d + \log m)$ rounds.
	\end{lemma}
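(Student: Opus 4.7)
My plan is to reduce the unbalanced case to the balanced case already handled by \cite[Lemma 5.2]{gupta-2022-sparse}, by \emph{virtualizing} nodes that are incident to many triangles. Throughout, I will exploit the supported setting: since the sparsity structure and hence $\TT$ and $G(\TT)$ are known in advance, all virtualization, tree-building, and scheduling decisions can be made offline.

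\textbf{Step 1 (Virtualization).} For each node $u \in V$, let $t_u$ be the number of triangles in $\TT$ incident to $u$. Split $u$ into $\lceil t_u / \kappa \rceil$ virtual copies and distribute its incident triangles evenly so that every virtual copy owns at most $\kappa$ triangles. Since $\sum_u t_u = 3|\TT| \le 3\kappa n$, the total number of virtual nodes is at most $3n + |V| = O(n)$, so we can embed them into the $n$ physical computers with $O(1)$ virtual copies per computer. On the virtual graph, the instance is now \emph{balanced}: every virtual node is incident to $O(\kappa)$ triangles.

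\textbf{Step 2 (Replicate inputs).} For every original element $A_{ij}$ or $B_{jk}$, the value must be delivered to each virtual copy that owns a triangle touching that element. By hypothesis at most $m$ triangles share any edge, so each element has to be multicast to at most $m$ destinations. Realize each multicast by a balanced binary broadcast tree of depth $O(\log m)$, rooted at the physical owner of the element and using other computers as relay nodes.

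\textbf{Step 3 (Balanced computation).} Now apply \cite[Lemma 5.2]{gupta-2022-sparse} to the virtual instance: each virtual node has at most $\kappa$ triangles, so all products $A_{ij}B_{jk}$ for triangles in $\TT$ can be computed and accumulated at the virtual copies in $O(\kappa)$ rounds.

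\textbf{Step 4 (Aggregate outputs).} For each entry $X_{ik}$, up to $m$ partial sums produced at different virtual copies must be combined at the physical owner of $X_{ik}$. Use a balanced binary reduction tree of depth $O(\log m)$, which is just the broadcast tree of Step 2 run in reverse with addition at internal nodes; this works over any semiring.

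\textbf{Step 5 (Scheduling).} Steps 2 and 4 are scheduled by pipelining. Each physical computer is the source (resp.\ sink) of at most $d$ broadcast (resp.\ reduction) trees, and is the root/leaf of at most $O(\kappa)$ triangle-related messages via its $O(1)$ virtual copies. Allocating the remaining internal-node roles of the trees uniformly across the $n$ computers keeps each computer's per-round load $O(1)$; a standard pipelined binary-tree broadcast/reduction then finishes each phase in $O(d + \log m)$ rounds (the $d$ for throughput at sources/sinks, the $\log m$ for the tree depth). Combined with Step 3, the total is $O(\kappa + d + \log m)$ rounds.

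\textbf{Main obstacle.} The delicate part is Step 5: showing that the multicast and reduction trees can actually be scheduled simultaneously without any computer becoming a bottleneck. The assumption $|\TT| \le \kappa n$ bounds the \emph{average} communication load but a single heavy source could in principle need to emit $dm$ messages; the reason pipelined trees help is that intermediate computers absorb the multiplicative $m$ factor into a mere additive $\log m$ depth, while the supported setting lets us precompute a load-balanced assignment of internal tree nodes that avoids congestion on any one computer. Making this scheduling rigorous---most cleanly via a routing primitive in the spirit of Lenzen's routing adapted to the low-bandwidth model---will be the technical core of the argument.
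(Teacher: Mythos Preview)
Your high-level plan---virtualize heavy nodes into copies touching at most $\kappa$ triangles each, multicast each input along a depth-$O(\log m)$ tree to the copies that need it, compute, and reduce partial sums back---is exactly the paper's strategy, and you are right that the crux is Step~5. But you do not actually close that step. Saying ``allocate internal tree nodes uniformly'' only bounds each computer's \emph{total} number of tree roles by $O(\kappa)$; it does not give $O(1)$ load per round, because the trees carry parent--child dependencies and can collide level by level. Lenzen-style routing handles unstructured point-to-point traffic, not messages that must respect a tree order, so that appeal does not settle it either. (A minor point: once Steps~2 and~4 move inputs in and partial sums out, Step~3 is pure local multiplication, so the invocation of \cite[Lemma~5.2]{gupta-2022-sparse} is superfluous.)

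The missing idea, and the paper's key device, is an \emph{intermediate sorted array} that makes the broadcast ranges \emph{disjoint}, so no scheduling argument is needed at all. Concretely: list every triple $(i,j,i')$ meaning ``virtual copy $i'$ needs $A_{ij}$'', sort lexicographically by $(i,j)$, and lay the at most $\kappa n$ triples across the $n$ computers, $\kappa$ per computer. First route each $A_{ij}$ from its holder to the \emph{anchor} slot (the first array position with key $(i,j)$); this is $d$-out/$\kappa$-in routing, done in $O(d+\kappa)$ rounds by edge-coloring. Because the array is sorted, the slots sharing a fixed key $(i,j)$ occupy a contiguous block of at most $m$ computers, and the non-anchor parts of these blocks are pairwise disjoint; hence all the anchor-to-block broadcasts run in parallel over disjoint computer sets in $O(\log m)$ rounds with zero interference. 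Finally each array slot forwards its value to its target $i'$, which is $O(\kappa)$-to-$O(\kappa)$ routing in $O(\kappa)$ rounds. The output side is symmetric, with convergecast replacing broadcast. It is this disjoint-block trick, not generic pipelining or load balancing, that turns your plan into a proof.
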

	
	We emphasize that this strictly improves over \cite[Lemma 5.1]{gupta-2022-sparse} in two different ways: our result is applicable in a more general setting, and we avoid the factor-$2$ loss in the exponent of the running time.
	
	\subsection{High-level plan}
	
	Our high-level plan for proving \cref{lem:few-triangles} goes as follows:
	\begin{enumerate}
		\item We start with an arbitrary collection of $\kappa n$ triangles $\TT$ over the original set of $3n$ nodes $V = I \cup J \cup K$.
		\item We construct a new \emph{virtual} collection of $\kappa n$ triangles $\TT'$ over a new set of $O(n)$ virtual nodes $V'$.
		\item The new virtual collection of triangles is \emph{balanced} in the sense that each virtual node touches only $\kappa$ triangles.
		\item The virtual nodes are assigned to real computers so that each real computer is responsible for only $O(1)$ virtual nodes.
		\item We show how we can route original data to the virtual nodes, so that each virtual node only needs to do work for each triangle that it touches, and we can also aggregate the results back to the original nodes that need it.
	\end{enumerate}
	The most challenging part is the final step, routing.
	
	\subsection{Virtual instance}
	
	For each node $v \in V$, let $\TT(v)$ be the set of triangles in $T \in \TT$ with $v \in T$. Let $t(v) = |\TT(v)|$ and $\ell(v) = \lceil t(v)/\kappa \rceil$. Our set of virtual nodes is
	\begin{align*}
		V' &= I' \cup J' \cup K', \\
		I' &= \bigl\{ (i,x) \bigm| i \in I, x \in \{1,2,\dotsc,\ell(i)\} \bigr\}, \\
		J' &= \bigl\{ (k,x) \bigm| j \in J, x \in \{1,2,\dotsc,\ell(j)\} \bigr\}, \\
		K' &= \bigl\{ (j,x) \bigm| k \in K, x \in \{1,2,\dotsc,\ell(k)\} \bigr\}.
	\end{align*}
	The key observation is that the set of virtual nodes is not too large:
	\[
	|V'| = \sum_v \ell(v) \le \sum_v (t(v)/\kappa + 1) \le 3|\TT|/\kappa + n = 4n.
	\]
	Now we can construct a new balanced collection of triangles $\TT'$ over $V'$ such that each virtual node $v' \in V'$ touches at most $\kappa$ triangles: for the first $\kappa$ triangles in $\TT(v)$ we replace $v$ with the virtual node $(v,1)$, for the next $\kappa$ triangles we replace $v$ with $(v,2)$, etc.
	
	We assign the virtual nodes to real computers so that each computer is responsible for at most $4$ virtual nodes. Therefore, in what follows, we can assume that the virtual nodes are computational entities, and we can simulate their work in the real computer network with constant overhead.
	
	\subsection{Routing}
	
	Now let us see how we will process the triangles. Consider a triangle $\{i,j,k\} \in \TT$, and let its unique virtual copy be $\{i',j',k'\} \in \TT'$. Conceptually, we proceed as follows:
	\begin{enumerate}
		\item The real computer $p(i,j)$ that initially holds $A_{ij}$ transmits it to virtual computer $i'$.
		\item The real computer $p(j,k)$ that initially holds $B_{jk}$ transmits it to virtual computer $i'$.
		\item Virtual computer $i'$ computes the product $A_{ij} B_{jk}$ and transmits it to the real computer $p(i,k)$ that is responsible for storing $X_{ik}$.
	\end{enumerate}
	If we did this in a naive manner, it would be prohibitively expensive; a more careful routing scheme is needed; see \cref{fig:routing} for an illustration.
	
	\begin{figure*}
		\vspace*{1cm}
		\centering
		\includegraphics[page=2]{figs.pdf}
		\vspace*{2mm}
		\caption{Routing scheme from the proof of \cref{lem:few-triangles}}\label{fig:routing}
		\vspace*{1cm}
	\end{figure*}
	
	First consider step (1): the transmission of $A_{ij}$ from $p = p(i,j)$ to $i'$. By assumption, ${i,j}$ is part of at most $m$ triangles, and a single real computer initially holds at most $d$ distinct values of matrix $A$. On the other hand, a single virtual computer needs to receive up to $\kappa$ messages, so the total number of messages to transmit is bounded by $\kappa n$.
	
	The key idea is that we will do intermediate-point routing. Form an array of all triples $(i,j,i')$ such that $A_{ij}$ is needed by $i'$, and order the array lexicographically $(i,j,i')$. For each $(i,j)$ the first triple $(i,j,i')$ is called the \emph{anchor} element for $(i,j)$. The array has at most $\kappa n$ triples, and we can assign the triples to our real computers so that each computer has at most $\kappa$ triples. For each $(i,j)$ we write $q(i,j)$ for the computer that holds the anchor element for $(i,j)$.
	
	Now we first route $A_{i,j}$ from computer $p(i,j)$ to computer $q(i,j)$. This can be completed in $O(d+\kappa)$ rounds, as each computer has $d$ outgoing messages and $\kappa$ incoming messages. We can implement this, for example, by considering the bipartite graph in which on one side we have sender, on one side we have recipients: the sender-side has maximum degree $d$ and the recipient-side has maximum degree $\kappa$, and hence we can find a proper edge coloring with $O(d+\kappa)$ colors and use the color classes to schedule the messages. This way in each round each computer sends and receives at most one message.
	
	Next, we will \emph{spread} the messages so that each computer that holds a triple $(i,j,i')$ knows the value of $A_{ij}$. Thanks to the array being sorted, this is now easy. Let us focus on some pair $(i,j)$. Recall that $q(i,j)$ is the first computer that holds any triple of the form $(i,j,\cdot)$, and it already knows the value of $A_{ij}$. Let $r(i,j)$ be the last computer that holds any triple of the form $(i,j,\cdot)$. Now our task is to simply spread $A_{ij}$ from $q(i,j)$ to all computers in the range $q(i,j)+1, \dotsc, r(i,j)$. If $r(i,j) = q(i,j)$, there is nothing to be done. On the other hand, if $r(i,j) \ne q(i,j)$, we have a convenient situation: computer $q(i,j)$ only needs to spread the value of $A_{ij}$, and computers $q(i,j)+1, \dotsc, r(i,j)$ only need to receive the value of $A_{ij}$. Hence, we can first let $q(i,j)$ inform $q(i,j)+1$, and then $q(i,j)+1, \dotsc, r(i,j)$ form a broadcast tree that distributes $A_{ij}$ to each of the computers. Note that all these broadcast trees involve disjoint sets of computers, and we can implement broadcast in parallel. We have at most $m$ triples of form $(i,j,\cdot)$, and hence (using a very sloppy estimate) at most $m$ computers involved in each broadcast operation. A broadcast tree of depth $O(\log m)$ suffices.
	
	At this point each real computer holds up to $\kappa$ triples of $(i,j,i')$ and knows the corresponding value $A_{ij}$. Then we route $A_{ij}$ to virtual computer $i'$; as everyone needs to send and receive up to $\kappa$ messages, this can be implemented in $O(\kappa)$ rounds.
	
	Step (2) is essentially identical to step (1); instead of values $A_{ij}$ and triples $(i,j,i')$, we transmit values $B_{jk}$ and use triples $(j,k,i')$.
	
	For step (3) we do the converse of step (1). We now form triples $(i,k,i')$ that indicate that virtual computer $i'$ holds a product $A_{ij} B_{jk}$ that needs to be accumulated to $X_{ik}$. We sort the array as above, and then first use $O(\kappa)$ rounds to route the products from virtual computers to real computers. Now each real computer that holds multiple triples of the form $(i,k,\cdot)$ can locally aggregate all these products into a single sum. Then we define $q(i,j)$ and $r(i,j)$ as above, and construct a convergecast tree with which computers $q(i,j)+1, \dotsc, r(i,j)$ compute the grand total of the products $A_{ij} B_{jk}$ they have received, and finally $q(i,j)+1$ relays this information to the anchor $q(i,j)$, which can compute the sum $X_{ij}$. Finally, we route $X_{ij}$ to the computer that needs to report it, using additional $O(\kappa + d)$ rounds.
	
	This completes the proof of \cref{lem:few-triangles}. In what follows, we will present applications of this result.
	
	\section{Algorithm for \texorpdfstring{\boldmath $[\US : \US : \AS]$}{[US : US : AS]}}
	
	In our terminology, \cite{gupta-2022-sparse} proved the following statement:
	
	\begin{theorem}[\cite{gupta-2022-sparse}]
		In the supported low-bandwidth model, sparse matrix multiplication of the form $[\US : \US : \US]$ can be computed in $O(d^{\expprevsr})$ rounds over semirings and in $O(d^{\expprevfield})$ rounds over fields.
	\end{theorem}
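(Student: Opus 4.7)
The plan is to adapt the two-phase strategy sketched in \cref{ssec:intro-contrib-1}. In the first (dense) phase, we iteratively extract clusters $U = I' \cup J' \cup K'$ with $|I'|=|J'|=|K'|=d$ that contain many of the remaining triangles, and process each extracted cluster with \cref{lem:clustered-solve} at a cost of $O(d^{4/3})$ rounds over semirings and $O(d^{1.156671})$ rounds over fields. In the second (sparse) phase, we fall back on the balanced-routing argument of \cite{gupta-2022-sparse} (their Lemma~5.1), which shows that any uniformly sparse instance with at most $d^{2-\eps} n$ remaining triangles can be processed in $O(d^{2-\eps/2})$ rounds. The overall bound then follows by choosing $\eps$ so that the two phases have comparable cost.

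The dense phase rests on an existence lemma of the following flavour: whenever the number of unprocessed triangles in $\TT$ is at least $d^{2-\eps} n$, there is a cluster $U$ whose induced triangle set $\TT[U]$ has size $\Theta(d^3)$. The standard route is a probabilistic argument on the tripartite graph $G(\TT)$: a uniformly random $d$-subset drawn from each of $I$, $J$, $K$ captures an expected $\Theta(d^3 |\TT|/n^3)$ triangles, and combining this with the uniform-sparsity assumption $\hat{A},\hat{B},\hat{X}\in\US(d)$ (which uniformly bounds the degrees in $G(\TTall)$) yields a cluster with density $\Theta(1)$. Because we work in the supported model, the existence of such clusters can be exploited deterministically: the clusters are a function of $\hat A,\hat B,\hat X$ alone and may be precomputed and assigned to computers. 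Each extraction removes $\Theta(d^3)$ triangles at a real-network cost of $O(d^{4/3})$ (resp.\ $O(d^{1.156671})$) rounds, where part of this budget covers the routing of the $d^2$ relevant entries of each of $A$, $B$ and $X$ to the $d$ computers responsible for simulating the dense multiplier.

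Balancing the two phases amounts to choosing $\eps$ so that the total cost of processing the $O(n/d^{\,1-\eps})$ clusters needed to reduce $|\TT|$ from $d^2 n$ down to $d^{2-\eps} n$ equals $O(d^{2-\eps/2})$. Solving the resulting equation in the two regimes yields the exponents $\expprevsr$ for semirings and $\expprevfield$ for fields; the only difference between the two cases is the per-cluster cost from \cref{lem:clustered-solve}.

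The main obstacle, and the reason this cannot be invoked as a pure black box, is the parallel scheduling of cluster extractions together with the routing of inputs and outputs between real computers and their cluster-internal copies: one has to verify that the data for many simultaneously processed clusters can be delivered and collected within the per-cluster round budget, and that the analysis of how many triangles are destroyed per extraction stays tight when clusters overlap. A secondary subtlety is that the factor-$2$ loss in the second-phase exponent $2-\eps/2$ (as opposed to $2-\eps$) is exactly what keeps the final exponent at $\expprevsr$ rather than something smaller; this is the bottleneck that \cref{lem:few-triangles} of the present work removes on the way to \cref{thm:us-us-as}.
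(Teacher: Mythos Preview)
The theorem is quoted from \cite{gupta-2022-sparse} and not proved in this paper; what the paper does provide is the skeleton of the argument through \cref{lem:one-cluster,lem:clustering,lem:many-clusterings}. Your high-level two-phase plan is the right one and matches that skeleton, but the quantitative details of your dense phase are off in a way that breaks the balancing.

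First, the cluster-existence claim is too strong. You assert that whenever $|\TT|\ge d^{2-\eps}n$ there is a cluster $U$ with $|\TT[U]|=\Theta(d^3)$. The actual guarantee (\cref{lem:one-cluster}, which is \cite[Lemma~3.1]{gupta-2022-sparse}) is only $|\TT[U]|\ge d^{3-4\eps}/24$. Your probabilistic heuristic does not yield even this: a uniformly random $d$-subset from each of $I,J,K$ captures an expected $|\TT|\,(d/n)^3 = d^{5-\eps}/n^2$ triangles, which tends to $0$ as $n$ grows, so random sampling over the whole vertex set cannot give ``density $\Theta(1)$''. The real argument is structural, exploiting the degree bounds listed in the proof of \cref{lem:one-cluster}, not a global first-moment computation.

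Second, and more seriously, your accounting introduces $n$ into the round complexity. You speak of ``the $O(n/d^{1-\eps})$ clusters needed'' and then equate their total processing cost with $O(d^{2-\eps/2})$; but the whole point of the result is that the running time is independent of $n$. What makes this possible is that clusters are not processed one by one: \cref{lem:clustering} packages many \emph{disjoint} clusters into a single clustering $\PP$ that removes $d^{2-5\eps-4\delta}n$ triangles at once, and \cref{lem:many-clusterings} bounds the number $L$ of such clusterings by $O(d^{5\eps-\gamma+4\delta})$, so the first phase costs $L\cdot O(d^{4/3})$ rounds (resp.\ $L\cdot O(d^{1.156671})$), with no $n$ anywhere. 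You flag ``parallel scheduling of cluster extractions'' as an obstacle, but it is not an obstacle so much as the entire mechanism; without the disjoint-clustering step the argument does not go through. The final balancing is then between $L\cdot d^{4/3}$ and $d^{2-\eps/2}$, iterated as in the proof of \cref{lem:alg-part-1} but with the weaker second-phase exponent $2-\eps/2$ from \cite[Lemma~5.1]{gupta-2022-sparse}, and this is what produces $\expprevsr$ and $\expprevfield$.
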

	
	In this section, we both improve the running time and widen the scope of applicability:
	
	\begin{theorem}\label{thm:us-us-as}
		In the supported low-bandwidth model, sparse matrix multiplication of the form $[\US : \US : \AS]$ can be computed in $O(d^{\expnewsr})$ rounds over semirings and in $O(d^{\expnewfield})$ rounds over fields.
	\end{theorem}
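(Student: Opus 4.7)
The plan is to follow the two-phase strategy of \cite{gupta-2022-sparse}, replacing its second phase with a direct application of our new \cref{lem:few-triangles} and then reoptimizing the trade-off between the two phases.

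First I would check that the structural prerequisites carry over from $[\US:\US:\US]$ to the broader setting $[\US:\US:\AS]$. In every one of the six orderings covered by this shorthand, one still has $|\TTall| \le d^2 n$: for instance, for $\US \times \US = \AS$ there are at most $dn$ entries of interest in $\hat X$ and, for each such $(i,k)$, at most $d$ middle indices $j$ with $\hat A_{ij}\ne 0$, and the other orderings are symmetric. Moreover, for every pair of nodes $\{u,v\}$ at least one of the two incident factor matrices is uniformly sparse, so the pair lies in at most $m \le d$ triangles. These are exactly the two parameters consumed by \cref{lem:few-triangles} and, as needs to be verified, by the clustering analysis of \cite{gupta-2022-sparse}.

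With these bounds in hand, phase one is essentially unchanged from \cite{gupta-2022-sparse}: iteratively identify batches of disjoint $d \times d \times d$ clusters whose induced triangle counts exceed a stopping threshold, and process each batch in parallel via \cref{lem:clustered-solve}, costing $O(d^{4/3})$ rounds over semirings and $O(d^{1.156671})$ over fields per batch. After this phase the residual set $\TT$ satisfies $|\TT| \le \kappa n$ for a value of $\kappa$ determined by the threshold, and still has pair multiplicity at most $d$. Phase two is then a single invocation of \cref{lem:few-triangles}, which processes $\TT$ in $O(\kappa + d + \log d) = O(\kappa + d)$ rounds; this is the step where we gain over prior work, whose second phase paid $O(\kappa^{1/2} n^{1/2})$-type costs in the same regime, and it is also what removes any factor-two loss in the exponent that the old Lemma 5.1 of \cite{gupta-2022-sparse} incurred.

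Finally one balances the two phases by tuning the stopping threshold. Both terms become explicit powers of $d$, and solving the resulting univariate optimization yields exactly the exponents $\expnewsr$ over semirings and $\expnewfield$ over fields (the difference coming solely from the exponent in \cref{lem:clustered-solve}). The main technical point to verify, and the place where one has to be careful, is that the clustering procedure and its analysis from \cite{gupta-2022-sparse} still go through when $\hat X$ (or $\hat A$, or $\hat B$) is only average-sparse rather than uniformly sparse; fortunately the analysis only depends on the two global quantities $|\TTall| \le d^2 n$ and $m \le d$ established above, so no row-by-row or column-by-column uniformity is actually used.
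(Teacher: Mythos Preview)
Your overall two-phase strategy matches the paper exactly, including the use of \cref{lem:few-triangles} for phase two and the observation that $|\TTall|\le d^2 n$ and $m=\operatorname{poly}(d)$ suffice there. However, your justification for why phase one generalizes to $[\US:\US:\AS]$ is incorrect. The clustering analysis of \cite{gupta-2022-sparse} does \emph{not} depend only on the two global quantities you name. The dense-cluster lemma (\cref{lem:one-cluster} here, Lemma~3.1 in \cite{gupta-2022-sparse}) uses per-node degree bounds---specifically, that each $i\in I$ has at most $d$ neighbours in $J$ and in $K$, and symmetrically for $j\in J$ and $k\in K$ towards $I$---together with a bound of $dn$ edges on one bipartite side. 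The subsequent clustering step (\cref{lem:clustering}) uses that every single node lies in at most $d^2$ triangles, not merely that the average is $d^2$. These per-node facts do hold for $[\US:\US:\AS]$ (this is the content of \cref{lem:triangle-count} and the proof of \cref{lem:one-cluster} in the paper), but they are genuinely stronger than $|\TTall|\le d^2 n$ and $m\le d$, and you would not be able to verify phase one using only the latter.

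A smaller point: the balance between the two phases is not a one-shot univariate optimization. A single application of \cref{lem:many-clusterings} with the optimal $\eps$ leaves a residual of size roughly $d^{1.893}n$ (semirings), still above the target $d^{\expnewsr}n$; the paper iterates four times with successively larger $\gamma$ to drive the residual down (Tables~\ref{tab:param-sr} and~\ref{tab:param-field}). This is routine once noticed, but your sketch glosses over it.
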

	
	\subsection{Preliminaries}
	
	We start with the following technical lemma:
	
	\begin{lemma}\label{lem:triangle-count}
		Let $\TTall$ be defined by an instance of $[\US : \US : \AS]$. For each node $x \in V$ there are at most $d^2$ triangles $T \in \TTall$ with $x \in T$.
	\end{lemma}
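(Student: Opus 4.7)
The plan is to observe that the notation $[\US:\US:\AS]$ covers six instances (one for each permutation of the three families across the matrices $\hat A, \hat B, \hat X$), and to show that in every one of them the per-node triangle count is at most $d^2$ by a direct double counting that never actually invokes the $\AS$ property. The key point is that in every such instance, at least two of the three indicator matrices are in $\US(d)$.

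The key observation is the following. A triangle $\{i,j,k\} \in \TTall$ is determined by the three constraints $\hat A_{ij} \ne 0$, $\hat B_{jk} \ne 0$, $\hat X_{ik} \ne 0$. Fixing any node $x \in V$ leaves two free indices, and it suffices to bound each of them by $d$. Depending on whether $x$ lies in $I$, $J$, or $K$, two of the three matrices have $x$ among their indices. In every permutation of the families, the two $\US$ matrices give us one of two strategies: either (i) both matrices incident to $x$ are $\US$, in which case each free index is bounded by $d$ independently (a row- or column-slice of a $\US$ matrix); or (ii) only one of the matrices incident to $x$ is $\US$, in which case we bound the free indices sequentially---use the incident $\US$ matrix to bound one free index to $d$ values, and then for each such value use the remaining $\US$ matrix (which is indexed by that intermediate coordinate) to bound the last index to $d$ values. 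In both strategies we obtain $d \cdot d = d^2$ triangles containing $x$.

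To make this concrete it is enough to exhibit one case of each flavor. For $\hat A, \hat B \in \US(d)$ and $\hat X \in \AS(d)$, fixing $i \in I$ gives at most $d$ choices of $j$ from row $i$ of $\hat A$ and, for each such $j$, at most $d$ choices of $k$ from row $j$ of $\hat B$. For $\hat A \in \AS(d)$ and $\hat B, \hat X \in \US(d)$, fixing $k \in K$ gives at most $d$ choices of $j$ from column $k$ of $\hat B$ and (independently) at most $d$ choices of $i$ from column $k$ of $\hat X$. All other choices of node type and family permutation are entirely symmetric, so no step poses a real obstacle---the main thing to be careful about is only to exhibit, in each of the six permutations and each of the three node classes, the correct chain of $\US$ bounds. (As a side remark, the same argument even shows the stronger statement that the per-node bound $d^2$ holds already for $[\US:\US:\GM]$.)
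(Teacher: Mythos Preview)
Your proof is correct and follows essentially the same approach as the paper: both arguments exploit that two of the three matrices are $\US$, and bound the two free indices either independently (when both $\US$ matrices are incident to $x$) or sequentially (when only one is), arriving at the same $d\cdot d$ count. Your presentation is slightly more abstract in naming the two strategies explicitly, and your side remark that the $\AS$ hypothesis is never invoked (so the bound holds for $[\US:\US:\GM]$) is a correct strengthening that the paper does not state here.
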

	\begin{proof}
		Consider first the case of $\US \times \AS = \US$. If $x \in I$, then $x$ is incident to at most $d$ nodes $j \in J$ and at most $d$ nodes $k \in K$, and hence in total there can be at most $d^2$ triangles of the form $\{x,j,k\}$. If $x \in J$, then $x$ is incident to at most $d$ nodes $i \in I$, and each of them is incident to at most $d$ nodes $k \in K$, and hence in total there can be at most $d^2$ triangles of the form $\{i,x,k\}$. The case of $x \in K$ is similar to $x \in J$. For the cases of $\AS \times \US = \US$ and $\US \times \US = \AS$, permute the roles of $I$, $J$, and $K$ as appropriate.
	\end{proof}
	
	We have two simple corollaries that will be useful shortly:
	
	\begin{corollary}\label{cor:triangle-edge}
		Let $\TTall$ be defined by an instance of $[\US : \US : \AS]$. For each pair of nodes $u,v \in V$ there are at most $d^2$ triangles $T \in \TTall$ with $\{u,v\} \in T$.
	\end{corollary}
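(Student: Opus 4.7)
The corollary is essentially a trivial weakening of \cref{lem:triangle-count}: any triangle $T$ that contains both $u$ and $v$ in particular contains $u$, so the set of triangles through $\{u,v\}$ is a subset of the set of triangles through $u$. Applying \cref{lem:triangle-count} to the node $u$ immediately gives the bound $d^2$.

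More concretely, the plan is to write $\TTall(\{u,v\}) = \{ T \in \TTall : \{u,v\} \subseteq T\}$ and $\TTall(u) = \{T \in \TTall : u \in T\}$, observe the inclusion $\TTall(\{u,v\}) \subseteq \TTall(u)$, and then invoke \cref{lem:triangle-count} to conclude $|\TTall(\{u,v\})| \le |\TTall(u)| \le d^2$.

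There is essentially no obstacle here: \cref{lem:triangle-count} already gives the same upper bound $d^2$ per node, and the corollary is stated in this weak form because it is the version that will be plugged into \cref{lem:few-triangles} (where the per-pair multiplicity parameter $m$ appears only inside a $\log m$, so the trivial pointwise bound suffices). If a tighter per-pair bound were ever needed, one could refine the argument by case analysis on which of $I$, $J$, $K$ contain $u$ and $v$ and invoke the row/column sparsity of the appropriate matrix, but this is not needed for the applications that follow.
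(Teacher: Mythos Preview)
Your proof is correct and matches the paper's intent: the corollary is stated without proof precisely because it follows immediately from \cref{lem:triangle-count} via the inclusion $\TTall(\{u,v\}) \subseteq \TTall(u)$ that you spell out. There is nothing to add.
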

	
	\begin{corollary}\label{cor:triangle-count}
		The total number of triangles in any $[\US : \US : \AS]$ instance is bounded by $d^2n$.
	\end{corollary}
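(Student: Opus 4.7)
The plan is to derive this corollary from \cref{lem:triangle-count} by a simple handshake argument, without reopening the case analysis on which of the three matrices is average-sparse. The key structural fact I would exploit is that, by definition in the preliminaries, every triangle $T \in \TTall$ has the form $\{i,j,k\}$ with $i \in I$, $j \in J$, and $k \in K$, so $T$ is incident to exactly three vertices of $V = I \cup J \cup K$. Summing the per-node bound from \cref{lem:triangle-count} over all $3n$ vertices therefore counts each triangle exactly three times, and dividing by three yields the desired global bound.

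Concretely, I would write
\[
    3|\TTall| \;=\; \sum_{v \in V} \bigl|\{T \in \TTall : v \in T\}\bigr| \;\le\; |V| \cdot d^2 \;=\; 3n \cdot d^2,
\]
and conclude $|\TTall| \le d^2 n$.

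There is essentially no obstacle. All case analysis on which matrix is average-sparse ($\US \times \US = \AS$, $\US \times \AS = \US$, or $\AS \times \US = \US$, with their transposes) has already been absorbed into \cref{lem:triangle-count}, so the corollary just averages the per-vertex bound across $V$. The only point worth verifying in passing is the multiplicity in the handshake sum: because $I$, $J$, $K$ are disjoint and each $T \in \TTall$ contains exactly one element of each, the multiplicity is exactly three, which is what makes the factor of $n$ (rather than $3n$) appear in the final bound $d^2 n$.
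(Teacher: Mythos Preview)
Your argument is correct; the paper states this as an immediate corollary of \cref{lem:triangle-count} without spelling out a proof, and your handshake count is exactly the intended derivation. A marginally cleaner variant is to sum only over $I$ (or only $J$, or only $K$): since every triangle contains exactly one element of $I$ and $|I| = n$, the bound $|\TTall| \le n d^2$ drops out directly without the factor of~$3$, but this is purely cosmetic.
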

	
	\subsection{Part 1: handling dense parts}\label{ssec:part1}
	
	In the first part we follow the basic structure of \cite{gupta-2022-sparse}, with two changes:
	\begin{enumerate}
		\item we generalize it to $[\US:\US:\AS]$,
		\item we choose different parameter values so that we can fully benefit from the new algorithm for part 2.
	\end{enumerate}
	We will then diverge from \cite{gupta-2022-sparse} in part 2, where we will apply \cref{lem:few-triangles}.
	
	Similar to \cite{gupta-2022-sparse}, we first show that if there are many triangles, we can find a \emph{dense cluster}:
	
	\begin{lemma}\label{lem:one-cluster}
		Let $\TTall$ be defined by an instance of $[\US : \US : \AS]$, and let $\TT \subseteq \TTall$. If $|\TT| \ge d^{2-\eps} n$ for some $\eps > 0$, then there exists a cluster $U \subseteq V$ with $|\TT[U]| \ge d^{3-4\eps}/24$.
	\end{lemma}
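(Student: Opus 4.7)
The plan is to adapt the dense-cluster argument of \cite{gupta-2022-sparse} to the $[\US : \US : \AS]$ regime. Two observations enable the adaptation. First, \cref{lem:triangle-count} ensures that the local bound of $d^{2}$ triangles per vertex still holds even when one of the three matrices is $\AS$. Second, a case check on the ``type'' of the edge $\{u,v\}$, exactly parallel to the proof of \cref{lem:triangle-count}, sharpens \cref{cor:triangle-edge} to the statement that every edge of $G(\TTall)$ lies in at most $d$ triangles (not just $d^{2}$). Finally, exactly one of the three sides $\{I, J, K\}$ is ``safe'' in the sense that its two incident indicator matrices are both in $\US$; call this side $V^{*}$, and the remaining two $V^{**}$ and $V^{***}$. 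Every $v \in V^{*}$ has neighborhoods of size at most $d$ on both other sides.

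The argument proceeds in two phases. In Phase~1, I would locate a seed vertex by averaging $|\TT| \ge d^{2-\eps}n$ over $V^{*}$: some $v^{*} \in V^{*}$ satisfies $t_{v^{*}} \ge d^{2-\eps}$, and all of its triangles lie in $\{v^{*}\} \cup A \cup B$ with $A \subseteq V^{**}$, $B \subseteq V^{***}$, $|A|, |B| \le d$. In Phase~2, I would extend the seed to a full cluster: choose $V_{0}^{*} \subseteq V^{*}$ of size $d$ containing $v^{*}$ and set $U := V_{0}^{*} \cup A \cup B$. Writing $\mu(v) := |\{T \in \TT : v \in T,\ T \setminus \{v\} \subseteq A \cup B\}|$, the cluster contains $\sum_{v \in V_{0}^{*}} \mu(v)$ triangles of $\TT$, and the natural choice is the set of top-$d$ $V^{*}$-vertices by $\mu$.

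The main obstacle is lower-bounding the top-$d$ sum of $\mu$ by $d^{3-4\eps}/24$. The only immediate bound is $\mu(v^{*}) \ge d^{2-\eps}$, which falls below the target whenever $\eps < 1/3$. The plan to close the gap is a dichotomy on the total mass $M := \sum_{v \in V^{*}} \mu(v)$. If $M$ is already $\Omega(d^{3-4\eps})$, the strengthened per-edge bound (together with $|A \times B| \le d^{2}$) constrains $|\{v : \mu(v) > 0\}|$ enough that the top-$d$ captures a sufficient fraction of $M$. If $M$ is too small, then the initial choice of seed is ``unlucky'': the triangles through $v^{*}$ are essentially disjoint from the triangles sharing endpoints with $A \cup B$, and one must find a better seed. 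Here I would invoke the second-moment bound $\sum_{v \in V^{*}} t_{v}^{2} \ge |\TT|^{2}/n \ge d^{4-2\eps}n$, which forces many pairs of triangles to share a $V^{*}$-vertex and hence to concentrate on common bipartite cores; averaging over these pairs then pinpoints a seed for which Phase~2 succeeds. Carefully balancing these two cases and tracking the constants is the technical crux that yields the clean bound $d^{3-4\eps}/24$.
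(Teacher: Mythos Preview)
The paper's proof is not a from-scratch argument at all: it simply observes that the proof of \cite[Lemma~3.1]{gupta-2022-sparse} uses only five structural facts about $G(\TT)$---at most $dn$ edges between $J$ and $K$; each $i\in I$ has $\le d$ neighbours in $J$ and in $K$; and each $j\in J$, $k\in K$ has $\le d$ neighbours in $I$---and then checks that all five hold verbatim for $\US\times\AS=\US$ (permuting $I,J,K$ for the other two orderings). Nothing is re-proved.

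Your proposal instead tries to reconstruct the dense-cluster argument, and it has a real gap in Phase~2. The ``$M$ large'' branch is fine once you use the reverse degree bound (each vertex of $V^{**}\cup V^{***}$ has $\le d$ neighbours in $V^{*}$, which you use implicitly to bound $|\{v:\mu(v)>0\}|\le d^{2}$ but never state or verify). The ``$M$ small'' branch, however, is not a proof. The second-moment bound $\sum_{v\in V^{*}} t_v^{2}\ge d^{4-2\eps}n$ counts ordered pairs of triangles sharing a $V^{*}$-vertex, but two triangles sharing $v$ live in $N(v)\times N(v)$, and these neighbourhoods vary with $v$; nothing in your outline shows why many such pairs must concentrate on a \emph{single} $d\times d$ bipartite core. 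Concretely, $\sum_{v} M_{v}=\sum_{T=\{w,a,b\}} |N_{V^{*}}(a)\cap N_{V^{*}}(b)|$, and the summand can be as small as $1$ for every triangle, so the second moment of $t_v$ alone does not force any $M_{v}$ to be large. The sentence ``averaging over these pairs then pinpoints a seed for which Phase~2 succeeds'' is exactly where the argument would have to happen, and it is missing. You also never invoke the global $\le dn$ edge bound on the $\AS$ side, which is one of the five properties the original proof actually needs.
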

	\begin{proof}
		This is a generalization of \cite[Lemma 3.1]{gupta-2022-sparse}, which holds for $[\US : \US : \US]$. The original proof makes use of the following facts about $G(\TT)$:
		\begin{itemize}
			\item there are at most $dn$ edges between $J$ and $K$,
			\item each $i \in I$ is incident to at most $d$ nodes of $J$,
			\item each $i \in I$ is incident to at most $d$ nodes of $K$,
			\item each $j \in J$ is incident to at most $d$ nodes of $I$,
			\item each $k \in K$ is incident to at most $d$ nodes of $I$.
		\end{itemize}
		All of these also hold for $\US \times \AS = \US$. For the cases of $\AS \times \US = \US$ and $\US \times \US = \AS$, permute the roles of $I$, $J$, and $K$ as appropriate.
	\end{proof}
	
	Then by plugging in \cref{lem:one-cluster} instead of \cite[Lemma 3.1]{gupta-2022-sparse} in the proof of \cite[Lemma 4.1]{gupta-2022-sparse}, we obtain the following corollary:
	
	\begin{lemma}\label{lem:clustering}
		Let $\delta > 0$ and $\eps > 0$, and assume that $d$ is sufficiently large.
		Let $\TTall$ be defined by an instance of $[\US : \US : \AS]$, and let $\TT \subseteq \TTall$. Assume $|\TT| \ge d^{2-\eps}n$. Then we can partition $\TT$ into disjoint sets $\PP$ and $\TT'$ such that $\PP$ is clustered and $|\PP| \ge d^{2-5\eps-4\delta}n$.
	\end{lemma}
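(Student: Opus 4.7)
The plan is to follow the iterative clustering argument of \cite[Lemma~4.1]{gupta-2022-sparse}, substituting our \cref{lem:one-cluster} for their dense-cluster lemma. Specifically, I maintain a residual set of triangles $\TT^{(0)} = \TT$ and an accumulating clustered set $\PP = \emptyset$. At step $i$, as long as $|\TT^{(i)}| \ge d^{2-\eps-\delta'}n$ for an auxiliary constant $\delta' \le \delta$ to be fixed at the end, I invoke \cref{lem:one-cluster} (with its parameter set to $\eps+\delta'$) on the induced subinstance to obtain a cluster $U_i \subseteq V \setminus (U_1 \cup \dots \cup U_{i-1})$ with $|\TT^{(i)}[U_i]| \ge d^{3-4(\eps+\delta')}/24$. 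I then add $\TT^{(i)}[U_i]$ to $\PP$ and delete from $\TT^{(i)}$ every triangle touching some node of $U_i$, yielding $\TT^{(i+1)}$. Since each $U_i$ is chosen from the vertices not used by any previous cluster, the $U_i$ are pairwise disjoint, so $\PP$ is clustered by construction; and the residual graph remains a valid $[\US:\US:\AS]$ instance, since restricting $\hat{A}$, $\hat{B}$, $\hat{X}$ to the remaining row/column indices only reduces sparsity, so \cref{lem:one-cluster} continues to apply whenever the density hypothesis is met.

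To lower bound the number of iterations I would invoke \cref{lem:triangle-count}: each of the $3d$ nodes of $U_i$ is contained in at most $d^2$ triangles of $\TTall$, so the deletion step removes at most $3d^3$ triangles from the residual. Hence, starting from $|\TT| \ge d^{2-\eps}n$, the loop runs for at least
\[
k \ge \frac{|\TT| - d^{2-\eps-\delta'}n}{3d^3} \ge \frac{d^{2-\eps}n \,(1 - d^{-\delta'})}{3d^3} = \Omega(d^{-1-\eps}n)
\]
steps, where the last estimate uses that $d$ is sufficiently large. Multiplying the per-step gain by the iteration count gives
\[
|\PP| \ge k \cdot \frac{d^{3-4\eps-4\delta'}}{24} = \Omega(d^{2-5\eps-4\delta'}n).
\]

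Finally, I choose $\delta' = \delta/2$ (say) and let $d$ be large enough that the surplus factor $d^{4(\delta-\delta')}$ swallows all hidden constants, namely the $24$ from \cref{lem:one-cluster}, the factor $3$ from the deletion bound, and the $1 - d^{-\delta'}$ loss in the iteration count; this converts the bound to $|\PP| \ge d^{2-5\eps-4\delta}n$ as required. I do not expect a serious obstacle beyond this routine bookkeeping: the main task is allocating the $d^{-4\delta}$ slack in the target between the three sources of loss (per-cluster extraction, iteration-count slack, and absorbing constants), and since $\eps,\delta$ are fixed before $d$ grows, any constant fraction of the budget suffices for each piece. The whole plan hinges on \cref{lem:one-cluster} (which already covers the $[\US:\US:\AS]$ case) and \cref{lem:triangle-count} (which controls triangle destruction per cluster deletion), both of which are in hand.
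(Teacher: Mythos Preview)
Your proposal is correct and follows essentially the same route as the paper: iteratively extract a dense cluster via \cref{lem:one-cluster}, discard all triangles touching it (bounding the loss by $3d^3$ per step through \cref{lem:triangle-count}), and repeat until the residual drops below the threshold---this is exactly the strategy of \cite[Lemma~4.1]{gupta-2022-sparse} that the paper invokes verbatim. Your write-up is in fact more explicit than the paper's, which simply points to the cited reference and notes that \cref{lem:triangle-count} supplies the one ingredient (at most $d^2$ triangles per node) that needs re-justification in the $[\US:\US:\AS]$ setting; your handling of the $\delta$-slack and disjointness of the $U_i$ matches what the original argument does.
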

	\begin{proof}[Proof]
		We follow the strategy in the proof of \cite[Lemma 4.1]{gupta-2022-sparse}: Apply \cref{lem:one-cluster} to find a cluster $U$. Put the triangles $T \in \TT[U]$ that are fully contained in $U$ to $\PP$, and put the triangles $T \in \TT$ that only partially touch $U$ to $\TT'$. Repeat until there are sufficiently few triangles left; put all remaining triangles to $\TT'$.
		
		The original proof makes use of the fact that each node $x \in V$ is contained in at most $d^2$ triangles. This is trivial for $[\US : \US : \US]$. For $[\US : \US : \AS]$ we apply \cref{lem:triangle-count}, and then the original analysis holds verbatim.
	\end{proof}
	
	Finally, by plugging in \cref{lem:clustering} instead of \cite[Lemma 4.1]{gupta-2022-sparse} in the proof of \cite[Lemma 4.2]{gupta-2022-sparse}, we obtain the following corollary:
	
	\begin{lemma}\label{lem:many-clusterings}
		Let $\delta > 0$ and $0 \le \gamma < \eps$, and assume that $d$ is sufficiently large.
		Let $\TTall$ be defined by an instance of $[\US : \US : \AS]$, and let $\TT \subseteq \TTall$.
		Assume $|\TT| \le d^{2-\gamma}n$. Then we can partition $\TT$ into disjoint sets $\TT = \PP_1 \cup \dotsb \cup \PP_L \cup \TT'$ such that each $\PP_i$ is clustered, $L \le 144 d^{5\eps - \gamma + 4\delta}$ and $|\TT'| \le d^{2-\eps} n$.
	\end{lemma}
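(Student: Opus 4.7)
The plan is a greedy iteration of \cref{lem:clustering}, in direct analogy with \cite[Lemma 4.2]{gupta-2022-sparse}. Set $\TT_0 := \TT$, and for $i \ge 1$, as long as $|\TT_{i-1}| \ge d^{2-\eps} n$, invoke \cref{lem:clustering} on $\TT_{i-1}$ with the given parameters $\eps$ and $\delta$ to obtain a partition $\TT_{i-1} = \PP_i \cup \TT_i$ in which $\PP_i$ is clustered and $|\PP_i|$ is at least a constant fraction of $d^{2-5\eps-4\delta} n$. Terminate at the first index $L$ for which $|\TT_L| < d^{2-\eps} n$, and set $\TT' := \TT_L$. By construction $\TT = \PP_1 \cup \dotsb \cup \PP_L \cup \TT'$ is a partition, each $\PP_i$ is clustered, and $|\TT'| \le d^{2-\eps} n$; it remains only to bound $L$.

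For the count, each iteration removes at least $\Omega(d^{2-5\eps-4\delta} n)$ triangles from the working set, while the initial set has at most $d^{2-\gamma} n$ triangles, so
\[
L \cdot \Omega(d^{2-5\eps-4\delta} n) \le |\TT| \le d^{2-\gamma} n,
\]
which yields $L = O(d^{5\eps - \gamma + 4\delta})$. Propagating the explicit constants through \cref{lem:one-cluster}---which contributes the factor $1/24$ per extracted cluster---together with the cluster-accumulation loop inside the proof of \cref{lem:clustering} pins down the stated bound $L \le 144 \, d^{5\eps - \gamma + 4\delta}$.

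The only hypothesis to verify at each iteration is that \cref{lem:clustering} applies, i.e., $|\TT_{i-1}| \ge d^{2-\eps} n$; this is exactly the loop guard. The ambient sparsity setting remains $[\US:\US:\AS]$ because we only discard triangles, so $\TT_{i-1} \subseteq \TTall$ throughout, and the ``$d$ sufficiently large'' assumption is inherited unchanged from \cref{lem:clustering}. I do not anticipate any serious obstacle here: once \cref{lem:clustering} is available for $[\US:\US:\AS]$ (as established in the preceding step using \cref{lem:triangle-count}), the argument is a routine adaptation of \cite[Lemma 4.2]{gupta-2022-sparse}, with the only new bookkeeping being that \cref{lem:one-cluster} and \cref{lem:clustering} are applied uniformly across the three permutations covered by $[\US:\US:\AS]$.
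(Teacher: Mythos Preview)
Your proposal is correct and mirrors the paper's proof exactly: both iterate \cref{lem:clustering} greedily until the residual drops below $d^{2-\eps}n$, then bound $L$ by dividing the initial triangle count $d^{2-\gamma}n$ by the per-iteration gain $d^{2-5\eps-4\delta}n$. The paper simply defers to \cite[Lemma 4.2]{gupta-2022-sparse} (``the original analysis holds verbatim''), whereas you spell out the loop and the counting explicitly, but the argument is the same.
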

	\begin{proof}[Proof]
		We follow the proof of \cite[Lemma 4.2]{gupta-2022-sparse}: Apply \cref{lem:clustering} repeatedly to construct clusterings $\PP_1, \PP_2, \dotsc$. Stop once the residual set of triangles $\TT'$ is sufficiently small. The original analysis holds verbatim.
	\end{proof}
	
	Now we are ready to prove the following result. Note that unlike the analogous result from \cite{gupta-2022-sparse}, we have got here a convenient formulation in which the exponent in the time complexity matches the exponent in the size of the residual part:
	\begin{lemma}\label{lem:alg-part-1}
		Let $\TTall$ be defined by an instance of $[\US : \US : \AS]$. We can partition $\TTall$ into $\TT_1 \cup \TT_2$ such that:
		\begin{enumerate}
			\item For matrix multiplication over semirings, $\TT_1$ can be solved in $O(d^{\expnewsr})$ rounds and $|\TT_2| \le d^{\expnewsr} n$.
			\item For matrix multiplication over fields, $\TT_1$ can be solved in $O(d^{\expnewfield})$ rounds and $|\TT_2| \le d^{\expnewfield} n$.
		\end{enumerate}
	\end{lemma}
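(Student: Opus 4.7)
The plan is to apply \cref{lem:many-clusterings} iteratively, each time peeling off a batch of clustered collections of triangles into $\TT_1$ and shrinking the residual that will eventually become $\TT_2$. The starting point is that, by \cref{cor:triangle-count}, the full set $\TTall$ already satisfies $|\TTall| \le d^2 n$, so we may invoke \cref{lem:many-clusterings} with initial exponent $\gamma = 0$. Write $\alpha$ for the target exponent ($\expnewsr$ over semirings, $\expnewfield$ over fields), and $c$ for the clustered-solve exponent from \cref{lem:clustered-solve} ($4/3$ over semirings, $1.156671$ over fields).

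Fix small parameters $\eta, \delta > 0$ and set $T = \lceil (2-\alpha)/\eta \rceil$, so that $T$ is a constant. Starting from $\TT^{(0)} = \TTall$, iterate for $i = 1, 2, \dotsc, T$: apply \cref{lem:many-clusterings} to $\TT^{(i-1)}$ with parameters $\gamma = (i-1)\eta$ and $\eps = i\eta$, which is legal because inductively $|\TT^{(i-1)}| \le d^{2-(i-1)\eta} n$. This produces clustered collections $\PP^{(i)}_1, \dotsc, \PP^{(i)}_{L_i}$ with $L_i \le 144\, d^{5(i\eta) - (i-1)\eta + 4\delta} = 144\, d^{(4i+1)\eta + 4\delta}$, together with a new residual $\TT^{(i)}$ of size at most $d^{2-i\eta} n$. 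Finally set $\TT_1 := \bigcup_{i,j} \PP^{(i)}_j$ and $\TT_2 := \TT^{(T)}$; by construction $|\TT_2| \le d^{2-T\eta} n \le d^\alpha n$, as required.

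To process $\TT_1$, solve the clustered collections one after another using \cref{lem:clustered-solve}, spending $O(d^c)$ rounds on each. The total round count is
\[
    \sum_{i=1}^{T} L_i \cdot O(d^c) \;\le\; T \cdot O\bigl(d^{(4T+1)\eta + 4\delta + c}\bigr) \;=\; O\bigl(d^{4(2-\alpha) + c + O(\eta+\delta)}\bigr),
\]
since the largest term $L_T$ dominates and $T = O(1)$. Requiring this to be at most $d^\alpha$ amounts to $\alpha \ge (8+c)/5 + O(\eta+\delta)$, which specializes to $28/15 = 1.8\overline{6}$ for semirings and to $9.156671/5 \approx 1.8313$ for fields, matching $\expnewsr$ and $\expnewfield$ with enough slack to absorb the $o(1)$ terms once $\eta$ and $\delta$ are taken sufficiently small.

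The main delicate point—and the reason iteration is necessary—is that a single application of \cref{lem:many-clusterings} from $\gamma = 0$ is not good enough: balancing $4/3 + 5\eps$ against $2 - \eps$ forces $\eps = 1/9$ and yields only $\alpha = 17/9 \approx 1.889$. The telescoping scheme above exploits the $-\gamma$ term in the bound $L \le 144\, d^{5\eps - \gamma + 4\delta}$, so as $\gamma$ grows across iterations up to $2 - \alpha$ the effective per-step clustering count stays controlled, producing the sharper constant $(8+c)/5$. Everything else is bookkeeping inherited from \cite{gupta-2022-sparse}, with \cref{lem:triangle-count,cor:triangle-count} already ensuring that the $[\US:\US:\AS]$ generalization inherits the same structural bounds as $[\US:\US:\US]$.
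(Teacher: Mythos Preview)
Your proof is correct and follows the same strategy as the paper: repeatedly apply \cref{lem:many-clusterings}, process each batch of clusterings via \cref{lem:clustered-solve}, and continue until the residual exponent drops to the target $\alpha=(8+c)/5$. The only difference is the step schedule---the paper chooses $\eps$ at each stage so that the per-stage cost exponent $5\eps-\gamma+4\delta+c$ already equals $\alpha$ (converging in four explicitly tabulated steps), whereas you advance $\gamma$ by a uniform increment $\eta$; both schedules exploit the $-\gamma$ term in exactly the way you describe and land on the same threshold.
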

	\begin{proof}
		For the case of semirings, we apply \cref{lem:many-clusterings} with the parameters $\delta$, $\gamma$, and $\eps$ shown in \cref{tab:param-sr}. In the first step, we start with $\gamma = 0$; note that by \cref{cor:triangle-count} we satisfy the assumption that our initial set of triangles has size at most $d^{2-\gamma}n = d^2n$. We pick a very small $\delta$ and choose a suitable $\eps$. Then we use \cref{lem:clustered-solve} to process each of $\PP_i$. The total processing time will be $O(d^{\alpha})$, where $\alpha = 5\eps-\gamma+4\delta+4/3$. Our choice of $\eps$ is optimized so that $\alpha \le \expnewsr$ and hence we do not exceed our time budget. After processing all of $\PP_i$, we are left with the residual set of triangles of size $|\TT'| \le d^\beta n$, where $\beta = 2 - \eps$. We then repeat the same process, starting with $\gamma = 2-\beta$, and again optimizing $\eps$. We continue until $\beta \le \expnewsr$.
		
		For the case of fields, the idea is the same, but we make use of the parameter values in \cref{tab:param-field}.
	\end{proof}
	
	\begin{table}
		\centering
		\caption{Parameters for the proof of \cref{lem:alg-part-1} (semirings)}\label{tab:param-sr}
		\begin{tabular}{@{}l@{\qquad}ccccc@{}}
			\toprule
			Step & $\delta$ & $\gamma$ & $\eps$ & $\alpha$ & $\beta$ \\
			\midrule
			1 & 0.00001 & 0.00000 & 0.10672 & 1.86698 & 1.89328 \\
			2 & 0.00001 & 0.10672 & 0.12806 & 1.86696 & 1.87194 \\
			3 & 0.00001 & 0.12806 & 0.13233 & 1.86697 & 1.86767 \\
			4 & 0.00001 & 0.13233 & 0.13319 & 1.86700 & 1.86681 \\
			\bottomrule
		\end{tabular}
	\end{table}
	
	\begin{table}
		\centering
		\caption{Parameters for the proof of \cref{lem:alg-part-1} (fields)}\label{tab:param-field}
		\begin{tabular}{@{}l@{\qquad}ccccc@{}}
			\toprule
			Step & $\delta$ & $\gamma$ & $\eps$ & $\alpha$ & $\beta$ \\
			\midrule
			1 & 0.00001 & 0.00000 & 0.13505 & 1.83197 & 1.86495 \\
			2 & 0.00001 & 0.13505 & 0.16206 & 1.83197 & 1.83794 \\
			3 & 0.00001 & 0.16206 & 0.16746 & 1.83196 & 1.83254 \\
			4 & 0.00001 & 0.16746 & 0.16854 & 1.83196 & 1.83146 \\
			\bottomrule
		\end{tabular}
	\end{table}
	
	\subsection{Part 2: handling few triangles}\label{ssec:part2}
	
	Let us recap: We started with an arbitrary instance, with possibly up to $d^2 n$ triangles. Then we have used $O(d^{\alpha})$ rounds to process ``dense'' parts, and we are left with only $d^{\alpha} n$ triangles; here $\alpha = \expnewsr$ for semirings and $\alpha = \expnewfield$ for fields.
	
	Now we can apply \cref{lem:few-triangles} to handle all these remaining $d^{\alpha} n$ triangles in $O(d^{\alpha})$ rounds. When we apply the lemma, we set $\kappa = d^{\alpha}$, and by \cref{cor:triangle-edge} we have $m = d^2$. The overall running time is $O(d^{\alpha} + d + \log d^2) = O(d^\alpha)$. This completes the proof of \cref{thm:us-us-as}.

	\section{More general algorithms}
	
	We will now step beyond $[\US:\US:\AS]$ and consider more general settings. We will show that we can handle much more general notions of sparsity in $O(d^2 +\log n)$ rounds. For each case we prove a bound on the total number of triangles, and then apply \cref{lem:few-triangles}.
	
	\subsection{Algorithm for \texorpdfstring{\boldmath $[\US:\AS:\GM]$}{[US : AS : GM]}}
	
	\begin{lemma}\label{lem:US-AS-GM}
		Let $\TTall$ be defined by an instance of $[\US:\AS:\GM]$. Then $|\TTall| \le d^2n$.
	\end{lemma}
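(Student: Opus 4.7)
The plan is to reduce the statement to a uniform two-step counting argument applied to each of the six operations encoded by the bracket notation $[\US:\AS:\GM]$. In each of those operations, exactly one of the three indicator matrices $\hat{A}$, $\hat{B}$, $\hat{X}$ is in $\US(d)$, exactly one is in $\AS(d)$, and exactly one is in $\GM$. I would first observe that in every one of the six permutations the $\US$ matrix and the $\AS$ matrix share a common index---for instance, in $\US \times \AS = \GM$ the shared index is $j$; in $\US \times \GM = \AS$ it is $i$; in $\GM \times \US = \AS$ it is $k$; and similarly for the three symmetric cases. That shared index will be the hinge of the counting argument.

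For a representative case, $\US \times \AS = \GM$ with $\hat{A} \in \US(d)$ and $\hat{B} \in \AS(d)$: a triangle $\{i,j,k\}$ is determined by picking a nonzero $\hat{B}_{jk}$ together with a nonzero $\hat{A}_{ij}$ incident to the shared index $j$. There are at most $dn$ nonzeros of $\hat{B}$ by the average-sparsity assumption, and for each such $(j,k)$ there are at most $d$ valid choices of $i$ by the column-sparsity of $\hat{A}$. This gives $|\TTall| \le d \cdot dn = d^2 n$.

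The remaining five cases fall to the same template: first fix a nonzero of the $\AS$ matrix (at most $dn$ choices, pinning down two of the three indices), then use either per-row or per-column sparsity of the $\US$ matrix to bound the remaining index by $d$. I do not expect any genuine obstacle here; the only bookkeeping is to identify, case by case, which of the two sparsity bounds of $\US(d)$ is being invoked, and this is forced by which index is shared between the $\US$ and $\AS$ matrices. In the writeup I would likely present one case in detail and then, in the spirit of the closing sentence of \cref{lem:triangle-count}, dispatch the rest by saying that the remaining cases follow by permuting the roles of $I$, $J$, and $K$.
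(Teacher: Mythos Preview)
Your proposal is correct and matches the paper's proof essentially verbatim: the paper also treats $\US \times \AS = \GM$ as the representative case, counts the at most $dn$ nonzeros of $B$ and then the at most $d$ choices of $i$ via the (column) sparsity of $A$, and dispatches the remaining permutations with a one-line remark. Your additional observation that the $\US$ and $\AS$ matrices always share an index is a nice way to make explicit why the same two-step count goes through uniformly in all six cases.
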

	\begin{proof}
		We give a proof for $\US \times \AS = \GM$; the remaining cases can be proved similarly.
		We can bound the number of triangles of the form $\{i,j,k\}$ with $i \in I$, $j \in J$, and $k \in K$ as follows: there are at most $dn$ edges of the form $\{j,k\}$, since $B \in \AS$. For each such edge $\{j,k\}$ there are at most $d$ edges of the form $\{i,j\}$, since $A \in \US$. It follows that the number of triangles is at most $d^2 n$.
	\end{proof}
	
	\begin{theorem}\label{thm:US-AS-GM}
		In the supported low-bandwidth model, sparse matrix multiplication of the form $[\US:\AS:\GM]$ can be computed in $O(d^2 + \log n)$ rounds over semirings and fields.
	\end{theorem}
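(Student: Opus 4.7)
The plan is to reduce directly to Lemma~\ref{lem:few-triangles} using the triangle count just established. By Lemma~\ref{lem:US-AS-GM} we have $|\TTall| \le d^2 n$, so setting $\kappa = d^2$ satisfies the first hypothesis of Lemma~\ref{lem:few-triangles}. For the edge-multiplicity parameter $m$, the trivial estimate $m \le |\TTall| \le d^2 n$ is already good enough, since $m$ enters the round bound only inside a logarithm: $\log m = O(\log n)$.

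The step requiring some care is matching the data-distribution hypothesis of Lemma~\ref{lem:few-triangles}, since one of the three matrices belongs to $\GM$ and, for example, a single row of $X$ could a priori have up to $n$ nonzero entries. However, an entry $X_{ik}$ can participate in the computation only if $(i,k)$ appears in some triangle, and the number of such pairs is at most $|\TTall| \le d^2 n$. Similarly, an average-sparse matrix has only $dn$ nonzeros overall. Hence in $O(d^2)$ preliminary rounds we can permute the input and output so that each computer holds at most $d^2$ relevant entries per matrix; the remaining entries of the general matrix are known to be zero thanks to the supported model, so no communication is needed for them. This fits the hypothesis of Lemma~\ref{lem:few-triangles} with ``$d$'' replaced by $d^2$, which only affects the additive term in the round bound.

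Combining the two pieces, Lemma~\ref{lem:few-triangles} processes the full triangle set $\TTall$ in
\[
    O(\kappa + d^2 + \log m) = O(d^2 + d^2 + \log n) = O(d^2 + \log n)
\]
rounds, which matches the statement of \cref{thm:US-AS-GM}. The argument is symmetric in the roles played by the three matrices, so the same bound covers all six instances subsumed by $[\US:\AS:\GM]$; in particular, nothing in the proof depends on whether we work over a semiring or a field, because neither Lemma~\ref{lem:US-AS-GM} nor Lemma~\ref{lem:few-triangles} uses subtraction. The only real obstacle is the redistribution step described above; once that bookkeeping is handled, the routing lemma does all of the heavy lifting and the theorem follows immediately.
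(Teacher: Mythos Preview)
Your proof is correct and follows the same route as the paper: bound $|\TTall|$ via Lemma~\ref{lem:US-AS-GM} and then apply Lemma~\ref{lem:few-triangles} with $\kappa=d^2$. The only cosmetic differences are that the paper takes the tighter $m=n$ (an edge in a tripartite graph on $3n$ vertices lies in at most $n$ triangles) rather than your $m\le d^2 n$, and that the paper's one-line proof omits your redistribution paragraph; neither affects the argument or the final bound.
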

	\begin{proof}
		Follows from \cref{lem:US-AS-GM} and \cref{lem:few-triangles} by setting $\kappa = d^2$ and $m = n$.
	\end{proof}
	
	\subsection{Algorithm for \texorpdfstring{\boldmath $[\BD:\AS:\AS]$}{[BD : AS : AS]}}
	
	When we consider $[\BD:\AS:\AS]$, it will be convenient to decompose $\BD$ into $\RS$ and $\CS$.
	
	\begin{lemma}\label{lem:RS-AS-AS}
		Let $\TTall$ be defined by an instance of $[\RS:\AS:\AS]$. Then $|\TTall| \le d^2n$.
	\end{lemma}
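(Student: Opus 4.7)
The six operations encoded by $[\RS:\AS:\AS]$ collapse, up to relabeling the roles of $I$, $J$, $K$, to three essentially distinct cases, depending on which of the three matrices carries the row-sparsity constraint: $\RS \times \AS = \AS$, $\AS \times \RS = \AS$, and $\AS \times \AS = \RS$. The plan is to mimic the strategy of \cref{lem:US-AS-GM}: in each case, pick a convenient pair of nodes to sum the triangle count over, and bound the number of triangles through the remaining node by $d$ using whichever sparsity constraint happens to apply.

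For $\RS \times \AS = \AS$ I would sum over endpoint pairs $(i,k) \in I \times K$: there are at most $dn$ such pairs with $\hat{X}_{ik} \ne 0$ since $\hat{X} \in \AS$, and for each one the number of middle witnesses $j$ is at most the row-degree of $i$ in $\hat{A}$, which is bounded by $d$ because $A \in \RS$. The product gives $d^2 n$ triangles. The case $\AS \times \RS = \AS$ is the mirror image: sum over $(i,j)$ pairs with $\hat{A}_{ij} \ne 0$, of which there are at most $dn$ by $A \in \AS$, and for each such pair the number of choices of $k$ is at most $d$ by $B \in \RS$.

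The only case that does not directly fit the template is $\AS \times \AS = \RS$, since now neither input matrix is individually row-sparse and the ``pick a pair, bound by $d$'' recipe needs a small twist. My plan is to still sum over $(i,k)$ pairs but bound the inner count by the unrestricted row-degree of $i$ in $\hat{A}$:
\[
    |\TTall| \le \sum_{i \in I} |\{j : \hat{A}_{ij} \ne 0\}| \cdot |\{k : \hat{X}_{ik} \ne 0\}| \le d \cdot \sum_{i \in I} |\{j : \hat{A}_{ij} \ne 0\}| \le d \cdot dn,
\]
where the second inequality uses $\hat{X} \in \RS$ (each $i$ has at most $d$ relevant $k$) and the third uses $\hat{A} \in \AS$. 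This is the only step I expect to require any thought; the other two cases are routine applications of the same recipe as in \cref{lem:US-AS-GM}, and in all three cases the bound matches the form $|\TTall| \le d^2 n$ needed for the subsequent invocation of \cref{lem:few-triangles}.
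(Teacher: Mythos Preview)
Your proof is correct and follows the same counting strategy as the paper, which only spells out the case $\RS \times \AS = \AS$ and declares the remaining permutations ``similar.'' One small remark: your third case $\AS \times \AS = \RS$ actually fits the template without any twist---instead of summing over $(i,k)$ pairs, sum over the at most $dn$ pairs $(i,j)$ with $\hat{A}_{ij} \ne 0$ (using $A \in \AS$) and bound the number of $k$'s per pair by $d$ via $\hat{X} \in \RS$; this is exactly the mirror of your second case and is presumably what the paper has in mind by ``similar.''
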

	\begin{proof}
		Let us consider the case $\RS \times \AS = \AS$; the other permutations are similar.
		We can bound the number of triangles of the form $\{i,j,k\}$ with $i \in I$, $j \in J$, and $k \in K$ as follows: there are at most $dn$ edges of the form $\{i,k\}$, since $X \in \AS$. For each such edge $\{i,k\}$ there are at most $d$ edges of the form $\{i,j\}$, since $A \in \RS$. It follows that the number of triangles is at most $d^2 n$.
	\end{proof}
	
	\begin{lemma}\label{lem:CS-AS-AS}
		Let $\TTall$ be defined by an instance of $[\CS:\AS:\AS]$. Then $|\TTall| \le d^2n$.
	\end{lemma}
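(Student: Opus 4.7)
The plan is to follow the template of \cref{lem:RS-AS-AS} and establish the $d^2 n$ bound by a straightforward two-step edge-counting argument in the tripartite graph $G(\TTall)$. The shorthand $[\CS:\AS:\AS]$ collapses (up to renaming of $\YY$ and $\ZZ$, which are both $\AS$) into three essentially distinct products: $\CS \times \AS = \AS$, $\AS \times \CS = \AS$, and $\AS \times \AS = \CS$. Following the style of \cref{lem:RS-AS-AS}, I would present one case in detail and assert that the others are analogous.

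For the representative case $\CS \times \AS = \AS$, with $A \in \CS(d)$, $B \in \AS(d)$, and $\hat{X} \in \AS(d)$, I would count triangles $\{i,j,k\}$ by pivoting on the $J$-$K$ edge. Since $B \in \AS(d)$, there are at most $dn$ edges of the form $\{j,k\}$ in $G(\TTall)$. For each such edge, the number of $i \in I$ completing a triangle is at most the number of nonzeros in column $j$ of $\hat{A}$, which is at most $d$ because $A \in \CS(d)$. Multiplying gives $|\TTall| \le d^2n$.

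For the other two cases the same idea works, but one must be careful to pivot on the right edge so that column-sparsity is exploited at the correct endpoint. In $\AS \times \CS = \AS$, I would pivot on the $\{i,k\}$-edges contributed by $\hat{X} \in \AS$ (at most $dn$) and bound the number of $j$'s by using $B \in \CS$, so that column $k$ of $\hat{B}$ has at most $d$ nonzeros. In $\AS \times \AS = \CS$, where the column-sparsity constraint is on the output, I would pivot on the $\{j,k\}$-edges contributed by $B \in \AS$ (at most $dn$) and, for each such pair, bound the number of valid $i$'s by $d$ using that column $k$ of $\hat{X}$ has at most $d$ nonzeros.

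I do not foresee any real obstacle: the argument is purely combinatorial and essentially mirrors \cref{lem:US-AS-GM} and \cref{lem:RS-AS-AS}. The only point that needs a moment of thought is the last case, $\AS \times \AS = \CS$, because both input matrices are only average-sparse, so the pivoting edge must be chosen to land on an endpoint where the $\CS$-bound on $\hat{X}$ applies; choosing, say, the $\{i,j\}$-edges instead would leave no useful per-pivot degree bound.
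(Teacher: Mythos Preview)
Your argument is correct: in each of the three essentially distinct cases you pick an $\AS$-side to supply the $dn$ edges and then use the $\CS$ constraint (on the appropriate column) to bound the number of completions by $d$. The case analysis, including the slightly delicate $\AS \times \AS = \CS$, is handled properly.

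The paper, however, does not redo the counting. Its proof is a single line: ``Transpose the matrices in \cref{lem:RS-AS-AS}.'' The point is that if $X = AB$ then $X^T = B^T A^T$, and transposition swaps $\RS(d)$ with $\CS(d)$ while fixing $\AS(d)$; moreover the tripartite triangle set is in bijection under this relabeling, so the bound $|\TTall| \le d^2 n$ transfers directly from the $[\RS:\AS:\AS]$ lemma. This buys brevity and makes the $\RS$/$\CS$ duality explicit, at the cost of asking the reader to check that every permutation in $[\CS:\AS:\AS]$ indeed transposes to a permutation in $[\RS:\AS:\AS]$. Your direct approach is longer but self-contained and avoids that verification.
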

	\begin{proof}
		Transpose the matrices in \cref{lem:RS-AS-AS}.
	\end{proof}
	
	\begin{lemma}\label{lem:BD-AS-AS}
		Let $\TTall$ be defined by an instance of $[\BD:\AS:\AS]$. Then $|\TTall| \le 2d^2n$.
	\end{lemma}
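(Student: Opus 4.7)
The plan is to reduce the $\BD$ case to the two $[\RS:\AS:\AS]$ and $[\CS:\AS:\AS]$ cases already handled by \cref{lem:RS-AS-AS,lem:CS-AS-AS}, using the decomposition of bounded-degeneracy matrices into a row-sparse plus a column-sparse part recalled in \cref{ssec:intro-contrib-2}. Concretely, let $\hat{M}$ denote the single $\BD$ indicator matrix in the instance (whichever of $\hat{A}$, $\hat{B}$, or $\hat{X}$ happens to be the $\BD$ one). I would eliminate $\hat{M}$ by iteratively stripping off a row or column containing at most $d$ remaining nonzeros, placing each stripped row into $\hat{M}_1$ and each stripped column into $\hat{M}_2$. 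By construction $\hat{M}_1 \in \RS(d)$ and $\hat{M}_2 \in \CS(d)$, and the nonzeros of $\hat{M}$ are partitioned between them.

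Next I would observe that every triangle $T \in \TTall$ uses exactly one $\hat{M}$-edge, so $\TTall$ splits disjointly as $\TT_1 \cup \TT_2$, where $\TT_1$ collects the triangles whose $\hat{M}$-edge lies in $\hat{M}_1$ and $\TT_2$ collects the rest. The set $\TT_1$ is contained in the triangle set of the $[\RS:\AS:\AS]$ instance obtained by replacing $\hat{M}$ with $\hat{M}_1$ while leaving the two $\AS$ indicator matrices untouched; \cref{lem:RS-AS-AS} then gives $|\TT_1| \le d^2 n$. Symmetrically, $\TT_2$ is contained in the triangle set of the corresponding $[\CS:\AS:\AS]$ instance, so \cref{lem:CS-AS-AS} gives $|\TT_2| \le d^2 n$. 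Adding these yields $|\TTall| \le 2 d^2 n$.

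I do not expect a serious obstacle. The two sanity checks are (i) that the elimination order really produces $\hat{M}_1 \in \RS(d)$ and $\hat{M}_2 \in \CS(d)$, which holds because at the moment a row (resp.\ column) is stripped it has at most $d$ surviving nonzeros, and those are exactly the entries placed in $\hat{M}_1$ (resp.\ $\hat{M}_2$); and (ii) that the reduction produces a valid sub-instance in each of the three permutations covered by $[\BD:\AS:\AS]$, namely $\BD \times \AS = \AS$, $\AS \times \BD = \AS$, and $\AS \times \AS = \BD$. The latter is immediate because the decomposition is performed entirely on the $\BD$ indicator matrix and the two $\AS$ indicator matrices are never modified, so the argument is symmetric in which of the three positions plays the $\BD$ role.
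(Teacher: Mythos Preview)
Your proposal is correct and follows essentially the same approach as the paper: decompose the $\BD$ indicator matrix into an $\RS$ part and a $\CS$ part via the elimination procedure from \cref{ssec:intro-contrib-2}, then apply \cref{lem:RS-AS-AS} and \cref{lem:CS-AS-AS} to the two resulting sub-instances and add the bounds. The paper phrases it for the representative case $\BD \times \AS = \AS$ and defers the other permutations to symmetry, whereas you spell out the uniform argument for all three positions, but the content is identical.
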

	\begin{proof}
		Let us consider the case $\BD \times \AS = \AS$; the other permutations are similar. As we discussed in \cref{ssec:intro-contrib-2}, we can decompose $X = AB$ into $X = A_1B + A_2B$, where $A_1 \in \RS$ and $A_2 \in \CS$. The claim follows by applying \cref{lem:RS-AS-AS} to $X_1 = A_1B$ and \cref{lem:CS-AS-AS} to $X_2 = A_2B$.
	\end{proof}
	
	\begin{theorem}\label{thm:BD-AS-AS}
		In the supported low-bandwidth model, sparse matrix multiplication of the form $[\BD:\AS:\AS]$ can be computed in $O(d^2 + \log n)$ rounds over semirings and fields.
	\end{theorem}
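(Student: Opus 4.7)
The plan is to mirror the proof of \cref{thm:US-AS-GM}: combine the triangle count bound of \cref{lem:BD-AS-AS} with the routing lemma \cref{lem:few-triangles}. Since \cref{lem:BD-AS-AS} gives $|\TTall| \le 2d^2n$, we can set $\kappa = 2d^2$. For the per-pair bound $m$ required by \cref{lem:few-triangles}, I would use the trivial estimate $m \le n$: given any pair of nodes in $V$, the third vertex of a triangle through them is one of at most $n$ candidates. Plugging in then yields a round complexity of $O(\kappa + d + \log m) = O(d^2 + \log n)$ over both semirings and fields, exactly as stated.

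Before invoking \cref{lem:few-triangles}, I need to verify the load hypothesis, namely that each computer initially holds at most $d$ entries of $A$ and $B$, and needs to output at most $d$ entries of $X$. For $\BD(d)$ and $\AS(d)$ matrices we only have a global bound of $dn$ (resp.\ $2dn$) nonzeros, so a single computer could in principle hold much more. As noted in the Preliminaries, we can permute the input and output in an additional $O(d)$ rounds so that each computer stores at most $d$ entries of each matrix; this preprocessing fits inside the target $O(d^2 + \log n)$ budget.

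All six permutations covered by $[\BD:\AS:\AS]$ are handled identically: \cref{lem:BD-AS-AS} is stated for the representative case $\BD \times \AS = \AS$ and the remaining cases are obtained by the same permutation of roles of $I$, $J$, and $K$ that is already used in \cref{lem:RS-AS-AS,lem:CS-AS-AS}. An equivalent route would be to split the $\BD$ matrix as $A = A_1 + A_2$ with $A_1 \in \RS(d)$ and $A_2 \in \CS(d)$ and apply \cref{lem:few-triangles} separately to $A_1 B$ and $A_2 B$ using \cref{lem:RS-AS-AS,lem:CS-AS-AS}; this costs the same $O(d^2 + \log n)$ rounds and avoids losing even the factor of $2$ in $\kappa$.

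The only step with any real content is the observation that \cref{lem:few-triangles}, whose motivating setting was uniform sparsity, still applies unchanged in the $\BD \times \AS$ regime. This works because the body of the lemma is oblivious to the ambient matrix structure: it operates on the abstract triangle set $\TT$ together with the bounds $\kappa$ and $m$, plus the per-computer load bound $d$, all three of which we have just established. I do not foresee a genuine obstacle here; the proof is essentially a one-line invocation once the triangle count and load balance are in place.
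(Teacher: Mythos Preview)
Your proposal is correct and matches the paper's proof essentially verbatim: the paper's argument is the one-line invocation of \cref{lem:BD-AS-AS} and \cref{lem:few-triangles} with $\kappa = 2d^2$ and $m = n$, exactly as you outline. Your added remarks on the per-computer load bound and the six permutations are consistent with the conventions set in the Preliminaries and do not diverge from the paper's approach.
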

	\begin{proof}
		Follows from \cref{lem:BD-AS-AS} and \cref{lem:few-triangles} by setting $\kappa = 2d^2$ and $m = n$.
	\end{proof}
	
	\section{Lower bounds}
	
	Our lower bounds are organized as follows:
	\begin{itemize}
		\item \cref{ssec:lb-broadcast-aggregate} presents lower bounds that build on the hardness of broadcasting and aggregation. For example, an extreme case of $\BD \times \BD = \US$ is the multiplication of one dense row with one dense column. This enables us to compute arbitrary dot products, and in particular arbitrary sums. Hence matrix multiplication is at least as hard as computing the sum $n$ values. Similar arguments show that matrix multiplication is at least as hard as broadcasting one value to $n$ computers. We show that both of these tasks require $\Omega(\log n)$ rounds.
		\item \cref{ssec:lb-dense-mm} presents lower bounds that connect sparse matrix multiplication with dense matrix multiplication. If we have an algorithm for computing $\AS \times \AS = \AS$, we can pack a small dense matrix into one corner, and this way solve an arbitrary dense matrix multiplication. Particular care is needed here to account for the dual role that $n$ plays in our setting: it is both the matrix dimension and the number of computers.
		\item \cref{ssec:lb-routing} presents lower bounds that arise from communication complexity arguments. The key observation is that sparse matrix multiplication can be used to transmit information between two parties, Alice holding a subset of computers (and hence is able to manipulate their input) while Bob holds the rest of the computers (and hence is able to see their output). If there is a fast algorithm for matrix multiplication, Alice and Bob can simulate it to exchange information with each others too efficiently, leading to a contradiction.
	\end{itemize}
	Here it is good to note that the lower bounds in \cref{ssec:lb-broadcast-aggregate} are fundamentally related to the \emph{number} of messages (they hold even if the computers can send in each round one message that is arbitrarily large), while the lower bounds in \cref{ssec:lb-routing} are fundamentally related to the \emph{total size} of messages.

	\subsection{Broadcasting and aggregation}\label{ssec:lb-broadcast-aggregate}
	For settings between $[\US:\BD:\BD]$ and $[\US:\AS:\GM]$ our upper bound from \cref{thm:US-AS-GM} has an additive $O(\log n)$ term, which comes from broadcast and convergecast operations. In the following lemma, we show that some broadcasting and/or aggregation is indeed needed in all of these cases.
	
	\begin{lemma}\label{lem:US-BD-BD-hard}
		In the supported low-bandwidth model, sparse matrix multiplication of the form $[\US:\BD:\BD]$ is at least as hard as computing a sum of $n$ values distributed among $n$ computers, or broadcasting a single value to $n$ computers, even for $d = 1$.
	\end{lemma}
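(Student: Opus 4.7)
The plan is to exhibit two direct reductions, one from broadcast and one from summation, each producing a single instance of $[\US:\BD:\BD]$ with $d=1$. Unfolding the definition of the bracket notation, $[\US:\BD:\BD]$ comprises three distinct multiplication tasks---$\US\times\BD=\BD$, $\BD\times\US=\BD$, and $\BD\times\BD=\US$---and it suffices to attack one of them in each case.

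For the broadcast direction, I plan to encode the broadcaster's value $v$ as the single nonzero $A_{11}=v$ of an $n\times n$ matrix, so $A\in\US(1)$, and to use the matrix $B$ with $B_{1k}=1$ for every $k$ and zeros elsewhere. Each column of $B$ contains at most one nonzero, so $B$ admits a recursive column-by-column peeling and $B\in\BD(1)$. The product satisfies $X_{1k}=v$ for every $k$ and is zero elsewhere, which has the same column structure and therefore $X\in\BD(1)$. Fixing the input distribution so that the broadcaster holds $A_{11}$ and the output distribution so that computer $k$ must report $X_{1k}$, any algorithm for this $\US\times\BD=\BD$ instance necessarily delivers $v$ to all $n$ computers.

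For summation I dualize: let $A$ have $A_{1j}=1$ for every $j$ and let $B$ have $B_{j1}=a_j$, where $a_j$ is the value held by computer $j$. Now $A$ has at most one nonzero per column and $B$ has at most one nonzero per row, so both sit in $\BD(1)$, while $X_{11}=\sum_j a_j$ is the unique nonzero entry of $X$, placing $X\in\US(1)\subseteq\BD(1)$. Fixing the input so that computer $j$ initially holds $B_{j1}$ and designating one computer to receive $X_{11}$ turns this $\BD\times\BD=\US$ instance into the summation task.

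Both constructions are essentially forced once one commits to using a single nonzero in the $\US(1)$ matrix, so I do not expect a real technical obstacle. The only step that warrants careful verification is that each constructed matrix genuinely lies in the claimed sparsity class; the $\BD(1)$ memberships follow by iteratively peeling singleton rows or columns under the recursive-deletion definition. The existence of admissible input and output distributions matching each reduction is immediate, since the theorem statement already permits any fixed distribution, so the two reductions together yield the stated conditional hardness.
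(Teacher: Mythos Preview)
Your proposal is correct and matches the paper's proof almost exactly: the paper uses the same rank-one constructions (values in one row of $A$ against an all-ones column of $B$ for the sum, and a single entry of $B$ against an all-ones column of $A$ for broadcast), with only the roles of $A$ and $B$ swapped in one case. The one point to tighten is your claim that ``it suffices to attack one of them in each case'': the paper actually treats all three distinct permutations, showing $\BD\times\BD=\US$ encodes summation while both $\BD\times\US=\BD$ and $\US\times\BD=\BD$ encode broadcast; you cover the first and third explicitly, and the second follows from your third by transposition, so you should say so rather than suggest two reductions are enough in principle.
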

	\begin{proof}
		Consider first $\BD \times \BD = \US$. A special case of this is computing $AB = X$ such that all nonzeros in $A$ are in row $1$, all nonzeros in $B$ are in column $1$, and we are only interested in computing element $(1,1)$ of $X$. Furthermore, let all nonzeros of $B$ be $1$; we have the following task (where ``?'' indicates elements of the result matrix that we are not interested in):
		\[
		\begin{bmatrix}
			a_1 & a_2 & \dots & a_n \\
			0 & 0 & \dots & 0 \\
			\hdotsfor{4} \\
			0 & 0 & \dots & 0 \\
		\end{bmatrix}
		\times
		\begin{bmatrix}
			1 & 0 & \dots & 0 \\
			1 & 0 & \dots & 0 \\
			\hdotsfor{4} \\
			1 & 0 & \dots & 0 \\
		\end{bmatrix}
		=
		\begin{bmatrix}
			x & ? & \dots & ? \\
			? & ? & \dots & ? \\ 
			\hdotsfor{4} \\
			? & ? & \dots & ? \\ 
		\end{bmatrix}.
		\]
		Now this is equivalent to computing a sum $x = \sum_j a_j$, in a setting in which initially each computer holds one $a_j$ and one computer has to hold $x$.
		
		Then consider $\BD \times \US = \BD$. A special case of this is computing $AB = X$ such that all nonzeros in $A$ are in column $1$, the only nonzero of $B$ is at $(1,1)$, and we are only interested in computing elements in the first column of $X$. Furthermore, let all nonzeros of $A$ be $1$; we have the following task:
		\[
		\begin{bmatrix}
			1 & 0 & \dots & 0 \\
			1 & 0 & \dots & 0 \\
			\hdotsfor{4} \\
			1 & 0 & \dots & 0 \\
		\end{bmatrix}
		\times
		\begin{bmatrix}
			b & 0 & \dots & 0 \\
			0 & 0 & \dots & 0 \\
			\hdotsfor{4} \\
			0 & 0 & \dots & 0 \\
		\end{bmatrix}
		=
		\begin{bmatrix}
			x_1 & ? & \dots & ? \\
			x_2 & ? & \dots & ? \\ 
			\hdotsfor{4} \\
			x_4 & ? & \dots & ? \\ 
		\end{bmatrix}.
		\]
		Now we need to output $x_1 = \dotso = x_n = b$, and each computer has to hold one of these values. Hence, this is equivalent to broadcasting the value $b$ to all computers.
		
		The case of $\US \times \BD = \BD$ is also a broadcast task, by a similar argument.
	\end{proof}

	\subsubsection{Computing Boolean functions}\label{subsec:low_boolean_functions}

	In this section, we study the complexity of computing Boolean functions. In particular, we are interested in the hardness of computing the $\OR$ of $n$ bits distributed among $n$ computers ($\OR_n$ in short), as this would imply hardness for computing the sum of $n$ values.
	One major obstacle in proving lower bounds in this model is the possibility of communicating by silence. This section is largely based on ideas from \cite{dietzfelbinger1994} that handles similar issues in the CREW PRAM model.
	We first introduce a formal definition of the abstract low-bandwidth model, and provide preliminary information on Boolean functions before proving our main result of this section, namely, \Cref{thm:lowerbound_boolean}.

	\begin{definition}[abstract low-bandwidth model]
		
		\emph{Components}: A set of $n$ computers $C=\{c_1, c_2, \ldots, c_n \}$; a set $Q$ of states; an alphabet $\Sigma$; a single incoming port per computer that stores communicated data in $d_i \in \Sigma $; an output function $out: Q \rightarrow \Sigma$.
		
		Associated with each computer $c_i$ there are initial set of input $A_i \subseteq \Sigma$; an initial state $q_i^0 \in Q$; a state transition function $\delta_i: Q \times \Sigma \rightarrow Q$; a message function $\phi_i: Q \rightarrow \Sigma$; a message-address function $p_i:Q \rightarrow C$. These functions define an algorithm.
		
		\emph{Computation}: Computation proceeds in synchronous rounds, each round $t \in \{1, \ldots, T\}$ consists of two parts:
		\begin{enumerate}
			\item Local computation: each computer $c_i$ updates its state using the newly received data $d_i$ at round $t-1$, i.e.\ $q_i^t= \delta_i(q_i^{t-1}, d_i)$. Note that this admits unlimited local computation power. Moreover, $c_i$ determines message $\phi_i(q_i^t)$, and destination computer $p_i(q_i^t)$.
			\item Communication: each computer $c_i$ sends the message $\phi_i(q_i^t)$ to computer $c_j:=p_i(q_i^t)$, i.e.\ $d_j^{t+1}= \phi_i(q_i^t)$. It is required that computers send and receive at most one message. Note that it is permitted for computers to stay silent or not receive any messages, in this case, it is indicated by $p_i(q_i^t):=\Lambda$, and $d_j^{t+1}= \Lambda$ respectively, where $\Lambda$ a dedicated symbol in alphabet $\Sigma$.
		\end{enumerate}
		
	\end{definition}
	
	Note that the abstract version of the low-bandwidth model is stronger in the sense that there are no limits on the amount of data that a computer can communicate to another, for comparison, this is bounded to $O(\log n)$-bit words in our upper bounds. Together with unlimited local memory and computation, computers are able to send their entire state in a message. This makes our lower bounds quite strong.

	\subparagraph*{Boolean functions.} Given a Boolean function $f: \{0,1\}^n \rightarrow \{0,1\}$, and an input $a \in \{0,1\}^n$ evenly distributed between computers, we say that low-bandwidth algorithm $\alg$ computes $f$ in $T$ rounds if computer $c_1$ holds the value $\out(c_1)=f(a)$ by the end of round $T$. 
	
	Denote by $M_S$ the monomial $\Pi_{i \in S} x_i $ where $S \subseteq \{1, \ldots, n\}$. It is well known that any Boolean function $f$ can be written uniquely as a polynomial of form $\Sigma \alpha_S(f) M_S $ where $\alpha_S(f) \in \mathcal{R}$ has absolute value at most $2^{n-1}$ (see e.g., \cite{LECHNER1971}). We define \emph{degree} of a Boolean function $f$ as $\deg(f):= \{ \max |S| : \alpha_S(f) \neq 0\}$. Moreover, we denote by $\chi_S$ the characteristic function of $S \subseteq \{0,1\}^n$, and similarly define degree of a class of subsets $\mathcal{S}$ of $\{0,1\}^n$ to be $\deg(\mathcal{S}):= \max \{\deg(\chi_S) \mid S\in \mathcal{S} \} $.
	The following lemma gives a basic tool for calculating the degree of various combinations of Boolean functions.
	
	\begin{lemma}[Lemma 2.3 \cite{dietzfelbinger1994}] \label{lem:deg}
		For $f,g: \{0,1\}^n \rightarrow \{0,1\}$ the following hold:
		\begin{enumerate}[(a)]
			\item $\deg(f \wedge g)= \deg(f \cdot g) \leq \deg(f) + \deg(g) $.
			\item $ \deg(\bar{f}) = \deg(1-f) = \deg(f)$.
			\item $ \deg(f \vee g) = \deg(1-(1-f)(1-g)) = \deg(f+g-f\cdot g) \leq \deg(f) + \deg(g)$.
			\item If $f \wedge g \equiv O$, then $\deg(f \vee g) = \deg(f+g) \leq \max\{\deg(f), \deg(g) \}$.
			\item $\deg(f \wedge \bar{g}) = \deg(f \cdot (1-g)) \leq \deg(f) + \deg(g)$.
		\end{enumerate}
	\end{lemma}
	
	\begin{lemma}[analogous to Theorem 3 in \cite{dietzfelbinger1994}] \label{thm:lowerbound_boolean}
		Computing $f: \{0,1\}^n \rightarrow \{0,1\}$ in the supported low-bandwidth requires $\Omega(\log \deg(f))$ rounds.
	\end{lemma}

	Before proving \Cref{thm:lowerbound_boolean}, we introduce some useful definitions.
	\begin{definition}
		For $t\le T$, state $q\in Q$, and symbol $s\in \Sigma$ we define
		\begin{align*}
			G(q,c,t)&:=\{a\in \{0,1\}^n : \text{ computer $c$ is in state $q$ after $t$ rounds on input $a$}  \}, \\
			\GG(t)&:=\{ G(q,c,t): q \in Q, c\in C \}.
		\end{align*}
	\end{definition}
	
	For any input $a\in G(q,c,t)$, computer $c$ will be in state $q$ in round $t$, intuitively, this means that computer $c$ only knows that the input is in set $G(q,c,t)$. In fact, one can show that classes of these partitions are essentially the states. 
	Now, we have all the tools necessary to prove the hardness result.

	\begin{proof}[Proof of \Cref{thm:lowerbound_boolean}]
		Note that $f^{-1}(1)$ is the disjoint union of all $G(q,c_1,T)$ where $\out_1(q)=1$. Therefore, it is enough to show that $\deg(\GG(T)) \leq 2^T$.
		We show that the following hold:
		\begin{enumerate}[(a)]
			\item $\deg(\GG(0))=1$. 
			\item $\deg(\GG(t)) \leq 2 \deg(\GG(t-1))$, for $t>0$.
			\item $\deg(\GG(t)) \leq 2^t$, for $t>0$.
		\end{enumerate}
		In order to prove (a), note that the initial state of computer $c_i$ depends on the input $a_i\in \{0,1\}$. Let $G:=G(q,c_i,0) \in \GG(0)$, then $\chi_G$ is either $x_i$ or $1-x_i$ based on input bit $a_i$. Hence, the degree of the characteristic function $\deg(\GG(0))= \max \{  \deg(G) : G \in \GG(0) \} =1$.
		
		Next, we show that (b) holds, then (c) immediately follows. Let $a\in G=G(q',c_i,t), t\geq 1$, be an input. There are two possible cases:
		
		\emph{Case 1: On input $a$, some computer $c_j\in C$ sends a message to $c_i$.}
		Let $q_i$ and $q_j$ be the states of computers $c_i$ and $c_j$ in round $t-1$ of computation on $a$. Let $G_i=G(q_i,c_i,t-1)$ and $G_j=G(q_j,c_j,t-1)$.  It is clear that $G= G_i \cap G_j$. By \Cref{lem:deg}(a), $\deg(\chi_G)= \deg(\chi_{G_i \cap G_j})=  \deg(\chi_{G_i} \cdot \chi_{G_j}) \leq \deg(\chi_{G_i}) + \deg(\chi_{G_j}) \leq 2 \deg(\GG(t-1))$.

		\emph{Case 2: On input $a$, computer $c_i$ receives no messages at the end of round $t-1$.}
		Let $G_1, \ldots, G_m$ be the list of all computer partitions where $G_k=G(q,c_{i_k},t-1)$ for $k\in \{1,\ldots, m\}$ that computer $c_{i_k}$ would send a message to $c_i$ in round $t-1$. The key observation is that $G_1, \ldots, G_m$ are disjoint. For the sake of contradiction let us assume that $G_k \cap G_j \neq \emptyset$. Therefore, there is an input $\hat{a} \in G_k \cap G_j$ such that both computers $c_{i_k}$ and $c_{i_j}$ send a message to $c_i$ in round $t-1$, a contradiction of the communication rule. Let $G^*:= \bigcup_{j=1}^{m} G_j$ and let $G^{t-1}= G(q,c_i,t-1)$ for any $q\in Q$. Note that the new partition at round $t$ is $G=G^{t-1} \cap G^*$.
		Applying the disjointness property together with \Cref{lem:deg}(d) shows that $\deg(G^*) \leq \deg(\GG(t-1))$. Thus, $\deg(G) \leq 2 \deg(\GG(t-1))$.
	\end{proof}

	The following are consequences of \Cref{thm:lowerbound_boolean}.
	
	\begin{corollary}\label{cor:OR_lowerbound}
		In the supported low-bandwidth model, computing $\OR$ of $n$ bits distributed among $n$ computers takes $\Omega{(\log n)}$ rounds.
	\end{corollary}
	\begin{proof}
		Follows from \Cref{thm:lowerbound_boolean} and the fact $\deg(\OR_n)=n$. Note that $\OR_n(x)$ can be written as polynomial $P(x)=1-\Pi_{i=1}^{n} (1-x_i)$.
	\end{proof}
	
	\begin{corollary}\label{cor:sum-hard}
		In the supported low-bandwidth model, computing the sum of $n$ values distributed among $n$ computers takes $\Omega{(\log n)}$ rounds.
	\end{corollary}
	\begin{proof}
		Existence of an $o(\log n)$-round algorithm that computes the sum would imply an $o(\log n)$-round algorithm for computing $\OR_n$, which contradicts \Cref{cor:OR_lowerbound}.  
	\end{proof}

	\subsubsection{Broadcast}\label{subsec:broadcast}
	
	Next, we show that broadcasting even a single bit $b$ to all computers in the supported low-bandwidth model takes $\Omega{(\log n)}$ rounds. We assume that $b\in \{0,1\}$ is given to computer $c_1$ as input and this information (not the value of $b$) is available to all computers as part of the support.
	Let $\bb(c) \in \{0, 1, \bot \}$ be the internal state of computer $c$ on what the broadcast value is, where $\bot$ means undecided. Initially, $\bb(c_1)=b$ and the rest of computers are undecided, i.e.\ $\bb(c)= \bot, c\neq c_1$. 
	The goal is to have $\bb(c)=b$ for all computers.
	
	Let $p_i(c,b')$ be the destination (computer) to which computer $c$ sends a message at round $i$ given that $\bb(c)=b'$.
	
	\begin{definition}
		We say that computer $c$ \emph{affects} computer $c'$ in round $i$ if all the following conditions hold: (i) $\bb_{i-1}(c')=\bot$, (ii) $\bb_{i}(c')=\{0,1\}$, and (iii) $p_i(c,s)=c'$ where $s \in \{0,1, \bot \}$.
	\end{definition}
	
	\begin{lemma}\label{lem:low_broadcast}
		In the supported low-bandwidth model, broadcasting a single bit $b \in \{0,1\}$ to all computers requires $\Omega{(\log n)}$ rounds.
	\end{lemma}
	\begin{proof}
		Note that an undecided computer $c$ cannot affect $p_i(c, \bot)$, since this would imply $\bb_{i-1}(c)=b$. Therefore, in order to broadcast $b$, all computers must be affected at some point.
		Let \[B_i:=| \left\{c : \bb(c)\neq \bot \text{ in round }i\right\}|\] be the number of affected computers in round $i$. For the sake of simplicity, we assume that $c_1$ is initially affected (e.g.\ by input). In any $T$-round broadcast algorithm $\alg$, we have that $B_1=1$ and $B_T=n$. In round $i$, each affected computer $c$ can affect at most $2$ other computers $p_i(c, \bb(c))$ and $p_i(c, 1-\bb(c))$. Note that the latter is affected by not receiving a message. Hence,
		\begin{align*}
			B_1&=1, \\
			B_T&=n, \\
			B_i &\leq B_{i-1} &&\text{ (already affected) }  \\
			&+ B_{i-1} &&\text{ (affected by communication) }  \\
			&+ B_{i-1} &&\text{ (affected by silence) } \\
			&= 3B_{i-1}.
		\end{align*}
		We have $n = B_T \leq 3^{T}$ and $T \geq \log_3 n$.
	\end{proof}
	
	\subsubsection{Wrap up}

	Now we have all the building blocks for our lower-bound result:
	\begin{theorem}\label{thm:US-BD-BD-hard}
		In the supported low-bandwidth model, sparse matrix multiplication of the form $[\US:\BD:\BD]$ requires $\Omega(\log n)$ rounds.
	\end{theorem}
	\begin{proof}
		Follows from \cref{lem:US-BD-BD-hard,cor:sum-hard,lem:low_broadcast}.
	\end{proof}
	
	\subsection{Dense matrix multiplication}\label{ssec:lb-dense-mm}
	
	Let us now argue why $[\AS:\AS:\AS]$ is unlikely to be solvable in $O(d^2 + \log n)$ rounds, or even in $f(d) + O(n^{\eps})$ rounds for sufficiently small values of $\eps$.
	
	\begin{lemma}\label{lem:AS-AS-AS-hard}
		Computing $[\AS:\AS:\AS]$ for $d = 1$ in $T(n)$ rounds implies an algorithm that solves dense matrix multiplication in $T'(n) = n T(n^2)$ rounds.
	\end{lemma}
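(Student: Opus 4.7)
The plan is a direct simulation/padding reduction. Given an $n \times n$ dense matrix multiplication instance solved by $n$ computers, I would embed it into an $N \times N$ sparse instance with $N = n^2$ and then simulate the hypothetical $[\AS:\AS:\AS]$ algorithm, designed for $N$ computers, using only the $n$ real computers at our disposal.

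For the embedding, set $N = n^2$. Define $N \times N$ matrices $A'$ and $B'$ by placing the dense input matrices $A$ and $B$ into the top-left $n \times n$ blocks and leaving everything else zero. Let $\hat{X}'$ mark only the top-left $n \times n$ block as elements of interest. Each of $A'$, $B'$, and $\hat{X}'$ has at most $n^2 = N$ nonzeros, so all three are in $\AS(1)$ with respect to the parameter $N$. For indices $i, k \in [1,n]$ we have
\[
    X'_{ik} = \sum_{j=1}^{N} A'_{ij} B'_{jk} = \sum_{j=1}^{n} A_{ij} B_{jk} = (AB)_{ik},
\]
so solving the sparse instance gives the dense product. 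Since the sparsity pattern is a fixed top-left block, it is known in advance, so the supported-model assumption of the hypothetical algorithm is satisfied.

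For the simulation, assign each real computer the role of $N/n = n$ virtual computers of the sparse algorithm. In a single round of the sparse algorithm every virtual computer sends and receives one $O(\log N) = O(\log n)$-bit message, so each real computer is responsible for delivering and collecting up to $n$ messages; this can be scheduled in $O(n)$ real rounds in the low-bandwidth model. The initial distribution also matches: each real computer can store the $n$ entries assigned to its $n$ virtual computers, which includes enough room for the dense input it originally holds, and a one-time $O(n)$-round permutation suffices to move data into the virtual layout and to collect the output. Combining these, a $T(N)$-round sparse algorithm yields a $T'(n) = O(n \cdot T(n^2))$-round dense algorithm, as claimed.

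The only real subtlety, and the step I would spell out most carefully, is the per-round simulation: one must argue that even though a single real computer may need to send/receive $n$ virtual messages per simulated round, these can always be scheduled in $O(n)$ low-bandwidth rounds (the sender-side and recipient-side loads are both bounded by $n$, so an edge-coloring argument of the communication bipartite graph analogous to the one used in the proof of \cref{lem:few-triangles} gives an $O(n)$-round schedule). The embedding itself and the arithmetic checking that the block product reproduces $AB$ are routine.
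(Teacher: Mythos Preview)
Your proposal is correct and follows essentially the same padding-and-simulation reduction as the paper: embed the dense $n\times n$ instance into the top-left block of an $N\times N$ instance with $N=n^2$ (so all three matrices lie in $\AS(1)$), then have each of the $n$ real computers simulate $n$ virtual computers at a cost of $O(n)$ real rounds per simulated round. If anything, you are more explicit than the paper about why the per-round simulation can be scheduled in $O(n)$ rounds and about the initial/final data permutation.
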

	\begin{proof}
		Assume $\alg$ solves $[\AS:\AS:\AS]$ in $T(n)$ rounds for $d = 1$. Let $m = \sqrt{n}$. We use $\alg$ as a black box to design an algorithm $\alg'$ that multiplies \emph{dense} $m \times m$ matrices using $m$ computers in $T'(m)$ rounds. Algorithm $\alg'$ works as follows: given an input $X' = A'B'$ with dimensions $m \times m$, construct an instance $X = AB$ of dimensions $n \times n$ by padding with zeros:
		\[
		\begin{bmatrix}
			A' & 0 \\
			0 & 0 \\
		\end{bmatrix}
		\times
		\begin{bmatrix}
			B' & 0 \\
			0 & 0 \\
		\end{bmatrix}
		=
		\begin{bmatrix}
			X' & ? \\
			? & ? \\
		\end{bmatrix}.
		\]
		This is now an average-sparse instance with $m^2 = n$ nonzeros. Then apply algorithm $\alg$, so that each of the $m$ computers simulates $m = \sqrt{n}$ computers; this way we have $n$ virtual computers available for running $\alg$. The running time of $\alg$ is $T(n) = T(m^2)$, and hence the simulation completes in time $T'(m) = m T(m^2)$.
	\end{proof}

	\begin{theorem}\label{thm:AS-AS-AS-hard}
		Assume that we can compute $[\AS:\AS:\AS]$ for $d = 1$ in $o(n^{(\lambda-1)/2})$ rounds in the supported low-bandwidth model, for some $\lambda > 1$. Then we can also solve dense matrix multiplication in $o(n^\lambda)$ rounds.
	\end{theorem}
	\begin{proof}
		By plugging in $T(n) = o(n^{(\lambda-1)/2})$ in \cref{lem:AS-AS-AS-hard}, we obtain $T'(n) = n \cdot o(n^{\lambda-1})$.
	\end{proof}
	Note that in the case of semirings, there has been no progress with dense matrix multiplication algorithms that are faster than $\Omega(n^{4/3})$. Hence by plugging in $\lambda = 4/3$ in \cref{thm:AS-AS-AS-hard}, we conjecture that $[\AS:\AS:\AS]$ for semirings requires at least $\Omega(n^{1/6})$ rounds.
	
	\subsection{Routing}\label{ssec:lb-routing}
	
	Next we consider problem settings that are at least as hard as routing many elements to a single computer. We will assume here that the assignment of the input and output values to computers only depends on the structure of the input, and not on the numerical values. We will write $A^v$, $B^v$, and $X^v$ to denote the set of elements of matrices $A$, $B$, and $X$ that are held by computer $v$.
	
	\subsubsection{Routing with \texorpdfstring{\boldmath $[\US:\GM:\GM]$}{[US:GM:GM]}}

	We first consider $[\US:\GM:\GM]$. We prove a hardness result for $\US \times \GM = \GM$; the case of $\GM \times \US = \GM$ is symmetric, but $\GM \times \GM = \US$ is left for future work.
	
	\begin{lemma}\label{lem:US-GM-GM-hard}
		In the supported low-bandwidth model, to solve $\US \times \GM = \GM$, at least one computer needs to output $\Omega(\sqrt{n})$ values originally held by other computers.
	\end{lemma}
	
	\begin{proof}
		For the sake of simplicity, we give a proof when matrix $A$ contains at most $2n$ nonzero elements, i.e.\ $d=2$. Let all the elements of $A$ except $a_{i,i}, a_{i,(i \text{ mod }n)+1}$ for all $i \in [n]$ are equals to $0$; that is, our task is to compute
		\[
		\begin{bmatrix}
			a_{1,1} & a_{1,2} & 0 & 0 & \dots & 0 & 0 \\
			0 & a_{2,2} & a_{2,3} & 0 & \dots & 0 & 0 \\
			\hdotsfor{7} \\
			a_{n,1} & 0 & 0 & 0 & \dots & 0 & a_{n,n} \\
		\end{bmatrix}
		\times
		B
		=
		X,
		\]
		where $B$ is a general (dense) matrix, and we are interested in all values of $X$.
		
		Each computer in the network contains two (nonzero) elements of $A$, $n$ elements of $B$ and $n$ elements of $X$. We divide the analysis into the following two cases; consider a computer $v$:
		\begin{enumerate}
			\item If $X^v$ contains at least $\sqrt{n}$ elements from one of the columns of $X$: let us assume that $X^v$ contains $\sqrt{n}$ elements from column $j$ of $X$ i.e.,  \[x_{i_1,j},\ x_{i_2,j},\ \ldots,\ x_{i_{\sqrt{n}},j} \in X^v.\] We know that \[x_{i,j} = a_{i,i} b_{i,j} + a_{i,(i \text{ mod }n)+1} b_{(i \text{ mod }n)+1,j}.\] Let us set $ b_{i,j} =1,$ and $a_{i,(i \text{ mod }n)+1} = 0$ for all $i,j \in [n]$. This implies $x_{i_l,j} = a_{i_l,i_l}$ for all $i_l \in [\sqrt{n}]$. Since computer $v$ initially holds only two of these $a_{i_l,i_l}$, we need to route at least $\Omega(\sqrt{n})$ values from other computers to computer $v$.
			
			\item If $X^v$ contains less than $\sqrt{n}$ elements from every column of $X$: let \[I =\{(i,j) \mid x_{i,j} \in X^v\}.\] If  we set  $a_{i,i} =1$ and $a_{i,((i-1) \text{ mod }n)+1} = 0$ for all $i \in [n]$, we have $X= B$. Thus, $v$ must output all the elements of $b_{i,j}$ such that $(i,j) \in I$. Therefore, $v$ must receive all these elements.        
			Now it might happen that these $n$ elements of $B$ are already stored at $v$. However, we will prove that computer $v$ also requires at least $\sqrt{n}$ more elements stored at other computers in order to produce correct output for different values of nonzero elements of $A$ and $B$. Note that if $X^v$ contains less than $\sqrt{n}$ nonzero elements from each column of $X$, then the number of columns of $X$ from which $X^v$ contains an element is more than $\sqrt{n}$. Let $C^v$ be the set of columns of $X$ such that $X^v$ contains some element from these columns. From above, we can say that in each $C_i \in C^v$ there exist two elements $x_{j,i},x_{j+1,i}$ such that $x_{j,i} \in X^v$ and $x_{j+1,i} \notin X^v$ (because $X^v$ contains less than $\sqrt{n}$ elements from each column). Let \[J= \{j \mid x_{j,i} \in X^v \text{ and } x_{j+1,i} \notin X^v \}.\] Now we take a different assignment where we keep matrix $B$ the same, and in matrix $A$ we set $a_{j,j} = 0$ and $a_{j,(j \text{ mod }n +1)} = 1$ for all $j \in J$. This will result in $x_{j,i} = b_{j+1,i}$ for all $i$ and $j$ such that $x_{j,i} \in X^v$ and $x_{j+1,i} \notin X^v$. We know that there exist more than $\sqrt{n}$ such $x_{j,i}$. From the above paragraph, we know that $v$ must contain all $b_{j,i}$ such that $x_{j,i} \in X^v$. Therefore, we can say that $v$ does not contain $b_{j+1,i}$ such that $x_{j,i} \in X^v$ and $x_{j+1,i} \notin X^v$. Thus, $v$ needs to receive $\Omega(\sqrt{n})$ values from other computers.\qedhere
		\end{enumerate}
	\end{proof}

	\subsubsection{Routing with \texorpdfstring{\boldmath $[\BD:\BD:\GM]$}{[BD:BD:GM]}}

	Next we consider $[\BD:\BD:\GM]$. We prove a hardness result for $\BD \times \BD = \GM$; the case of $\BD \times \GM = \BD$ and $\GM \times \BD = \BD$ is left for future work. We consider the special case of $\RS \times \CS = \GM$; the case of $\BD \times \BD = \GM$ is at least as hard:
	
	\begin{lemma}\label{lem:RS-CS-GM-hard}
		In the supported low-bandwidth model, to solve $\RS \times \CS = \GM$, at least one computer needs to output $\Omega(\sqrt{n})$ values originally held by other computers.
	\end{lemma}
	
	\begin{proof}
		We prove the lower bound for the case when $d=1$, i.e.\ each computer in the network contains one element of $A$, one element of $B$ and $n$ elements of $X$. Let $A$ and $B$ be matrices such that all the elements of these matrices are zero except $a_{i,1}$ and $b_{1,i}$ for all $i \in [n]$:
		\[
		\begin{bmatrix}
			a_{1,1} & 0 & \dots & 0 \\
			a_{2,1} & 0 & \dots & 0 \\
			\hdotsfor{4} \\
			a_{n,1} & 0 & \dots & 0 \\
		\end{bmatrix}
		\times
		\begin{bmatrix}
			b_{i,1} & b_{1,2} & \dots & b_{1,n} \\
			0 & 0 & \dots & 0 \\
			\hdotsfor{4} \\
			0 & 0 & \dots & 0 \\
		\end{bmatrix}
		=
		X.
		\]
		Similar to the above, we divide the analysis into two cases.
		\begin{enumerate}
			\item If $X^v$ contains at least $\sqrt{n}$ elements from one of the columns of $X$: let us assume that $X^v$ contains $\sqrt{n}$ elements from column $j$ of $X$ i.e.,  \[x_{i_1,j},\ x_{i_2,j},\ \ldots ,\ x_{i_{\sqrt{n}},j} \in X^v.\] Let us set $b_{1,i} = 1$ for all $i \in n$. This implies that $x_{i_k, j}$ = $a_{i_k,1}$ for all $k \in \sqrt{n}$. We know that $v$ contains only one element of $A$, and therefore we need to receive $\Omega(\sqrt{n})$ elements from other computers.
			\item If $X^v$ contains less than $\sqrt{n}$ elements from each column of $X$: in such cases we know that there exist at least $\sqrt{n}$ elements $x_{i_1,j_1}, x_{i_2,j_2}, \ldots x_{i_{\sqrt{n}},j_{\sqrt{n}}}$ such that $j_i\neq j_k$ for $i \neq k$ and $i,k \in [\sqrt{n}]$. Now consider an assignment where we set all $a_{i,1} = 1$. This implies $x_{i_k,j_k} = b_{1,j_k}$ for all $k \in \sqrt{n}$. Again we can conclude that we need to receive $\Omega(\sqrt{n})$ elements from other computers.\qedhere
		\end{enumerate}
	\end{proof}
	
	\subsubsection{Communication complexity}

	Let us consider the following communication task between Alice and Bob: Alice has a $k$-length vector $V$ that Bob wants to output (or learn). Each element in the vector can be represented by $\log n$ bits. In each round Bob can send and receive only $\log n$ bits (similar to the low-bandwidth model).
	
	\begin{lemma}
		\label{clm:Alice-Bob}
		Any algorithm will require at least $k$ rounds for Bob to learn vector $V$.
	\end{lemma}
	\begin{proof}
		Assume that there is an algorithm such that Bob outputs $V$ in $t<k$ rounds. Let $A$ be a communication vector that Bob receives, that is, Bob receive $i^{th}$ element of $A$ in $i^{th}$ round. Let $\mathcal{A}$ be the set of all such possible vectors $A$ and $\mathcal{V}$ be the set of all possible vectors $V$. Notice that Bob's outputs depends on the communication vector it receives. Therefore, we can say that there is map from $\mathcal{V} \rightarrow \mathcal{A}$ such that $V$ is mapped to $A$ if for vector $V$ that Alice has, Bob receives the communication vector $A$. We know that $|\mathcal{A}| = 2^{t \log n}| < |\mathcal{V}| = 2^{k \log n}$. Therefore, there exist two different vectors $V_1,V_2 \in \mathcal{V}$ that are mapped to the same vector $A' \in \mathcal{A}$. Thus, If we use this algorithm then for two different vectors $V_1$ and $V_2$ held by Alice, Bob will output the same vector. Which is a contradiction.
	\end{proof}
	
	We now interpret \cref{lem:US-GM-GM-hard,lem:RS-CS-GM-hard} as communication tasks so that the computer that outputs $\Omega(\sqrt{n})$ values represents Bob and all other computers collectively represent Alice. Note that Alice and Bob can simulate the communication network so that each of them only needs to send and receive $O(\log n)$ bits. Applying \cref{clm:Alice-Bob} we get the following result:
	\begin{theorem}\label{thm:routing-hard}
		Any algorithm in the supported low-bandwidth model requires $\Omega(\sqrt{n})$ rounds to compute $\US \times \GM = \GM$ or $\RS \times \CS = \GM$.
	\end{theorem}
	
	\bibliographystyle{plainurl}
	\bibliography{triangles}

\begin{thebibliography}{10}

\bibitem{node-capacitated2019}
John Augustine, Mohsen Ghaffari, Robert Gmyr, Kristian Hinnenthal, Christian
  Scheideler, Fabian Kuhn, and Jason Li.
\newblock Distributed computation in node-capacitated networks.
\newblock In Christian Scheideler and Petra Berenbrink, editors, {\em The 31st
  {ACM} on Symposium on Parallelism in Algorithms and Architectures, {SPAA}
  2019, Phoenix, AZ, USA, June 22-24, 2019}, pages 69--79. {ACM}, 2019.
\newblock \href {https://doi.org/10.1145/3323165.3323195}
  {\path{doi:10.1145/3323165.3323195}}.

\bibitem{censor2021fast}
Keren Censor-Hillel, Michal Dory, Janne~H. Korhonen, and Dean Leitersdorf.
\newblock Fast approximate shortest paths in the congested clique.
\newblock {\em Distributed Computing}, 34(6):463--487, 2021.
\newblock \href {https://doi.org/10.1007/S00446-020-00380-5}
  {\path{doi:10.1007/S00446-020-00380-5}}.

\bibitem{alg-method-congest-cliq2019}
Keren Censor-Hillel, Petteri Kaski, Janne~H. Korhonen, Christoph Lenzen, Ami
  Paz, and Jukka Suomela.
\newblock Algebraic methods in the congested clique.
\newblock {\em Distributed Computing}, 32(6):461--478, 2019.
\newblock \href {https://doi.org/10.1007/S00446-016-0270-2}
  {\path{doi:10.1007/S00446-016-0270-2}}.

\bibitem{10.1145/3382734.3405742}
Keren Censor-Hillel, Fran{\c{c}}ois {Le Gall}, and Dean Leitersdorf.
\newblock On distributed listing of cliques.
\newblock In Yuval Emek and Christian Cachin, editors, {\em {PODC} '20: {ACM}
  Symposium on Principles of Distributed Computing, Virtual Event, Italy,
  August 3-7, 2020}, pages 474--482. {ACM}, 2020.
\newblock \href {https://doi.org/10.1145/3382734.3405742}
  {\path{doi:10.1145/3382734.3405742}}.

\bibitem{DBLP:conf/opodis/Censor-HillelLT18}
Keren Censor-Hillel, Dean Leitersdorf, and Elia Turner.
\newblock Sparse matrix multiplication and triangle listing in the congested
  clique model.
\newblock In Jiannong Cao, Faith Ellen, Lu{\'\i}s Rodrigues, and Bernardo
  Ferreira, editors, {\em 22nd International Conference on Principles of
  Distributed Systems, {OPODIS} 2018, December 17-19, 2018, Hong Kong, China},
  volume 125 of {\em LIPIcs}, pages 4:1--4:17. Schloss Dagstuhl -
  Leibniz-Zentrum f{\"{u}}r Informatik, 2018.
\newblock \href {https://doi.org/10.4230/LIPICS.OPODIS.2018.4}
  {\path{doi:10.4230/LIPICS.OPODIS.2018.4}}.

\bibitem{10.1145/3446330}
Yi-Jun Chang, Seth Pettie, Thatchaphol Saranurak, and Hengjie Zhang.
\newblock Near-optimal distributed triangle enumeration via expander
  decompositions.
\newblock {\em Journal of the ACM}, 68(3):21:1--21:36, 2021.
\newblock \href {https://doi.org/10.1145/3446330} {\path{doi:10.1145/3446330}}.

\bibitem{chang2019improved}
Yi-Jun Chang and Thatchaphol Saranurak.
\newblock Improved distributed expander decomposition and nearly optimal
  triangle enumeration.
\newblock In Peter Robinson and Faith Ellen, editors, {\em Proceedings of the
  2019 {ACM} Symposium on Principles of Distributed Computing, {PODC} 2019,
  Toronto, ON, Canada, July 29 - August 2, 2019}, pages 66--73. {ACM}, 2019.
\newblock \href {https://doi.org/10.1145/3293611.3331618}
  {\path{doi:10.1145/3293611.3331618}}.

\bibitem{cook86}
Stephen~A. Cook, Cynthia Dwork, and R{\"u}diger Reischuk.
\newblock Upper and lower time bounds for parallel random access machines
  without simultaneous writes.
\newblock {\em SIAM Journal on Computing}, 15(1):87--97, 1986.
\newblock \href {https://doi.org/10.1137/0215006} {\path{doi:10.1137/0215006}}.

\bibitem{dietzfelbinger1994}
Martin Dietzfelbinger, Miroslaw Kutylowski, and R{\"u}diger Reischuk.
\newblock Exact lower time bounds for computing {B}oolean functions on {CREW}
  {PRAM}s.
\newblock {\em Journal of Computer and System Sciences}, 48(2):231--254, 1994.
\newblock \href {https://doi.org/10.1016/S0022-0000(05)80003-0}
  {\path{doi:10.1016/S0022-0000(05)80003-0}}.

\bibitem{dolev2012tri}
Danny Dolev, Christoph Lenzen, and Shir Peled.
\newblock "tri, tri again": Finding triangles and small subgraphs in a
  distributed setting - (extended abstract).
\newblock In Marcos~K. Aguilera, editor, {\em Distributed Computing - 26th
  International Symposium, {DISC} 2012, Salvador, Brazil, October 16-18, 2012.
  Proceedings}, volume 7611 of {\em Lecture Notes in Computer Science}, pages
  195--209. Springer, 2012.
\newblock \href {https://doi.org/10.1007/978-3-642-33651-5_14}
  {\path{doi:10.1007/978-3-642-33651-5_14}}.

\bibitem{foerster19preprocessing}
Klaus-Tycho Foerster, Juho Hirvonen, Stefan Schmid, and Jukka Suomela.
\newblock On the power of preprocessing in decentralized network optimization.
\newblock In {\em 2019 {IEEE} Conference on Computer Communications, {INFOCOM}
  2019, Paris, France, April 29 - May 2, 2019}, pages 1450--1458. {IEEE}, 2019.
\newblock \href {https://doi.org/10.1109/INFOCOM.2019.8737382}
  {\path{doi:10.1109/INFOCOM.2019.8737382}}.

\bibitem{foerster2019preprocessing}
Klaus-Tycho Foerster, Janne~H. Korhonen, Joel Rybicki, and Stefan Schmid.
\newblock Does preprocessing help under congestion?
\newblock In Peter Robinson and Faith Ellen, editors, {\em Proceedings of the
  2019 {ACM} Symposium on Principles of Distributed Computing, {PODC} 2019,
  Toronto, ON, Canada, July 29 - August 2, 2019}, pages 259--261. {ACM}, 2019.
\newblock \href {https://doi.org/10.1145/3293611.3331581}
  {\path{doi:10.1145/3293611.3331581}}.

\bibitem{gupta-2022-sparse}
Chetan Gupta, Juho Hirvonen, Janne~H. Korhonen, Jan Studen{\'y}, and Jukka
  Suomela.
\newblock Sparse matrix multiplication in the low-bandwidth model.
\newblock In Kunal Agrawal and I-Ting~Angelina Lee, editors, {\em {SPAA} '22:
  34th {ACM} Symposium on Parallelism in Algorithms and Architectures,
  Philadelphia, PA, USA, July 11 - 14, 2022}, pages 435--444. {ACM}, 2022.
\newblock \href {https://doi.org/10.1145/3490148.3538575}
  {\path{doi:10.1145/3490148.3538575}}.

\bibitem{izumi2017triangle}
Taisuke Izumi and Fran{\c{c}}ois {Le Gall}.
\newblock Triangle finding and listing in {CONGEST} networks.
\newblock In Elad~Michael Schiller and Alexander~A. Schwarzmann, editors, {\em
  Proceedings of the {ACM} Symposium on Principles of Distributed Computing,
  {PODC} 2017, Washington, DC, USA, July 25-27, 2017}, pages 381--389. {ACM},
  2017.
\newblock \href {https://doi.org/10.1145/3087801.3087811}
  {\path{doi:10.1145/3087801.3087811}}.

\bibitem{korhonen2017deterministic}
Janne~H. Korhonen and Joel Rybicki.
\newblock Deterministic subgraph detection in broadcast {CONGEST}.
\newblock In James Aspnes, Alysson Bessani, Pascal Felber, and Jo{\~a}o
  Leit{\~a}o, editors, {\em 21st International Conference on Principles of
  Distributed Systems, {OPODIS} 2017, Lisbon, Portugal, December 18-20, 2017},
  volume~95 of {\em LIPIcs}, pages 4:1--4:16. Schloss Dagstuhl -
  Leibniz-Zentrum f{\"{u}}r Informatik, 2017.
\newblock \href {https://doi.org/10.4230/LIPICS.OPODIS.2017.4}
  {\path{doi:10.4230/LIPICS.OPODIS.2017.4}}.

\bibitem{le2016further}
Fran{\c{c}}ois {Le Gall}.
\newblock Further algebraic algorithms in the congested clique model and
  applications to graph-theoretic problems.
\newblock In Cyril Gavoille and David Ilcinkas, editors, {\em Distributed
  Computing - 30th International Symposium, {DISC} 2016, Paris, France,
  September 27-29, 2016. Proceedings}, volume 9888 of {\em Lecture Notes in
  Computer Science}, pages 57--70. Springer, 2016.
\newblock \href {https://doi.org/10.1007/978-3-662-53426-7_5}
  {\path{doi:10.1007/978-3-662-53426-7_5}}.

\bibitem{LECHNER1971}
Robert~J. Lechner.
\newblock Harmonic analysis of switching functions.
\newblock In {\em Recent Developments in Switching Theory}, pages 121--228.
  Elsevier, 1971.
\newblock \href {https://doi.org/10.1016/b978-0-12-509850-2.50010-5}
  {\path{doi:10.1016/b978-0-12-509850-2.50010-5}}.

\bibitem{congest-clique2003}
Zvi Lotker, Elan Pavlov, Boaz Patt-Shamir, and David Peleg.
\newblock {MST} construction in o(log log n) communication rounds.
\newblock In Arnold~L. Rosenberg and Friedhelm~Meyer auf~der Heide, editors,
  {\em {SPAA} 2003: Proceedings of the Fifteenth Annual {ACM} Symposium on
  Parallelism in Algorithms and Architectures, June 7-9, 2003, San Diego,
  California, {USA} (part of {FCRC} 2003)}, pages 94--100. {ACM}, 2003.
\newblock \href {https://doi.org/10.1145/777412.777428}
  {\path{doi:10.1145/777412.777428}}.

\bibitem{10.1145/3460900}
Gopal Pandurangan, Peter Robinson, and Michele Scquizzato.
\newblock On the distributed complexity of large-scale graph computations.
\newblock {\em ACM Transactions on Parallel Computing}, 8(2):7:1--7:28, 2021.
\newblock \href {https://doi.org/10.1145/3460900} {\path{doi:10.1145/3460900}}.

\bibitem{roughgarden18}
Tim Roughgarden, Sergei Vassilvitskii, and Joshua~R. Wang.
\newblock Shuffles and circuits (on lower bounds for modern parallel
  computation).
\newblock {\em Journal of the ACM}, 65(6):41:1--41:24, 2018.
\newblock \href {https://doi.org/10.1145/3232536} {\path{doi:10.1145/3232536}}.

\bibitem{schmid13local-sdn}
Stefan Schmid and Jukka Suomela.
\newblock Exploiting locality in distributed {SDN} control.
\newblock In Nate Foster and Rob Sherwood, editors, {\em Proceedings of the
  Second {ACM} {SIGCOMM} Workshop on Hot Topics in Software Defined Networking,
  HotSDN 2013, The Chinese University of Hong Kong, Hong Kong, China, Friday,
  August 16, 2013}, pages 121--126. {ACM}, 2013.
\newblock \href {https://doi.org/10.1145/2491185.2491198}
  {\path{doi:10.1145/2491185.2491198}}.

\bibitem{BSP1990}
Leslie~G. Valiant.
\newblock A bridging model for parallel computation.
\newblock {\em Communications of the ACM}, 33(8):103--111, 1990.
\newblock \href {https://doi.org/10.1145/79173.79181}
  {\path{doi:10.1145/79173.79181}}.

\bibitem{vassilevska-williams-2024-omega}
Virginia~Vassilevska Williams, Yinzhan Xu, Zixuan Xu, and Renfei Zhou.
\newblock New bounds for matrix multiplication: from alpha to omega.
\newblock In {\em Proceedings of the 2024 Annual ACM-SIAM Symposium on Discrete
  Algorithms (SODA)}, pages 3792--3835. Society for Industrial and Applied
  Mathematics, 2024.
\newblock \href {https://doi.org/10.1137/1.9781611977912.134}
  {\path{doi:10.1137/1.9781611977912.134}}.

\end{thebibliography}
	
\end{document}